\documentclass[final]{IEEEtran}
\usepackage[T1]{fontenc}
\usepackage[latin9]{inputenc}
\usepackage{array}
\usepackage{verbatim}
\usepackage{amsmath}
\usepackage{amssymb}
\usepackage{graphicx}
\usepackage{hyperref}
\usepackage{breakurl}

\usepackage{cite}\usepackage{amsthm}\usepackage{dsfont}\usepackage{array}\usepackage{mathrsfs}\usepackage{comment}
\usepackage{subfig}
\usepackage{epsfig}
\usepackage{epstopdf}
\usepackage[ruled]{algorithm2e}



\newcommand{\bSigma}{{\boldsymbol \Sigma}}
\newcommand{\bbeta}{{\boldsymbol \beta}}

\newcommand{\ba}{{\boldsymbol a}}
\newcommand{\cA}{{\mathcal A}}
\newcommand{\cS}{{\mathcal S}}

\newcommand{\bw}{{\boldsymbol w}}
\newcommand{\bU}{{\boldsymbol U}}
\newcommand{\bV}{{\boldsymbol V}}
\newcommand{\bv}{{\boldsymbol v}}

\newcommand{\bI}{{\boldsymbol I}}
\newcommand{\bY}{{\boldsymbol Y}}
\newcommand{\bH}{{\boldsymbol H}}

\newcommand{\bX}{{\boldsymbol X}}
\newcommand{\bx}{{\boldsymbol x}}

\newcommand{\bz}{{\boldsymbol z}}
\newcommand{\bZ}{{\boldsymbol Z}}
\newcommand{\bQ}{{\boldsymbol Q}}

\newcommand{\argmin}{\mathrm{argmin}}
\newcommand{\supp}{\mathrm{supp}}
\newcommand{\sgn}{\mathrm{sgn}}
\usepackage{cite}\usepackage{amsthm}\usepackage{dsfont}\usepackage{array}\usepackage{mathrsfs}\usepackage{comment}

\usepackage{hyperref}
\usepackage{breakurl}

\usepackage{color}
\usepackage[normalem]{ulem}
\newcommand\yc[1]{{\color{black}#1}}

\newcommand\yxl[1]{{\color{black}#1}}

\begin{document}

\theoremstyle{plain}\newtheorem{lemma}{\textbf{Lemma}}\newtheorem{theorem}{\textbf{Theorem}}\newtheorem{corollary}{\textbf{Corollary}}\newtheorem{assumption}{\textbf{Assumption}}\newtheorem{example}{\textbf{Example}}\newtheorem{definition}{\textbf{Definition}}

\theoremstyle{definition}

\theoremstyle{remark}\newtheorem{remark}{\textbf{Remark}}

\title{Low-Rank Positive Semidefinite Matrix Recovery from \yc{Corrupted Rank-One Measurements}}
 
\author{Yuanxin Li, \IEEEmembership{Student Member,~IEEE}, Yue Sun, and Yuejie Chi, \IEEEmembership{Member,~IEEE}$^{\star}$\thanks{Y. Li and Y. Chi are with Department of Electrical and Computer Engineering, The Ohio State University, Columbus, OH 43210 USA (e-mails: \{li.3822, chi.97\}@osu.edu). Y. Sun is with Department of Electronics Engineering, Tsinghua University, Beijing, China. Part of the work was done while Y. Sun was visiting The Ohio State University. }
\thanks{This work is supported in part by NSF under grant CCF-1422966, ECCS-1462191 and AFOSR under grant FA9550-15-1-0205. Corresponding e-mail: chi.97@osu.edu. Date: \today.}\thanks{Preliminary results of this paper were presented in part at the IEEE International Conference on Acoustics, Speech and Signal Processing, Shanghai, China, March 2016. } }

\maketitle

\begin{abstract}
We study the problem of estimating a low-rank positive semidefinite (PSD) matrix from a set of rank-one measurements using sensing vectors composed of i.i.d. standard Gaussian entries, which are possibly corrupted by arbitrary outliers. This problem arises from applications such as phase retrieval, covariance sketching, quantum space tomography, and power spectrum estimation. We first propose a convex optimization algorithm that seeks the PSD matrix with the minimum $\ell_1$-norm of the observation residual. The advantage of our algorithm is that it is free of parameters, therefore eliminating the need for tuning parameters and allowing easy implementations. We establish that with high probability, a low-rank PSD matrix can be exactly recovered as soon as the number of measurements is large enough, even when a fraction of the measurements are corrupted by outliers with arbitrary magnitudes. Moreover, the recovery is also stable against bounded noise. With the additional information of an upper bound of the rank of the PSD matrix, we propose another non-convex algorithm based on subgradient descent that demonstrates excellent empirical performance in terms of computational efficiency and accuracy.

\end{abstract}

\begin{keywords}
rank-one measurements, low-rank PSD matrix estimation, outliers
\end{keywords}

\section{Introduction}
In many emerging applications of science and engineering, we are interested in estimating a low-rank positive semidefinite (PSD) matrix $\boldsymbol{X}_{0}\in\mathbb{R}^{n\times n}$ from a set of nonnegative magnitude measurements: 
\begin{equation}\label{general_model}
z_{i} = \langle \boldsymbol{Z}_{i} , \boldsymbol{X}_{0}\rangle =  \langle \boldsymbol{a}_{i}\boldsymbol{a}_{i}^{T}, \boldsymbol{X}_{0}\rangle= \boldsymbol{a}_{i}^{T}\boldsymbol{X}_{0}\boldsymbol{a}_{i},  
\end{equation}
for $i=1,\ldots, m$, where $\langle \cdot,\cdot \rangle$ denotes the inner product operator. The measurement $z_i$ is quadratic in the sensing vector $\ba_i\in\mathbb{R}^n$, but linear in $\bX_0$, where the sensing matrix $\bZ_i = \ba_i\ba_i^T$ is {\em rank-one}. On one hand, such magnitude measurements could arise due to physical limitations, e.g. incapability of capturing phases, such as in phase retrieval and optical imaging from intensity measurements \cite{fienup1978reconstruction, candes2013phaselift, candes2013phase, waldspurger2015phase,schniter2015compressive,shechtman2015phase}, where only the squared intensity of linear measurements of a signal $\bx_0\in\mathbb{R}^n$ is recorded:
\begin{equation}\label{phase_retrieval}
z_i =  \left|\langle \ba_i, \bx_{0}\rangle\right|^2 = \ba_i^T \left(\bx_0\bx_0^T\right) \ba_i = \ba_i^T \bX_0\ba_i,
\end{equation}
where $\bX_0=\boldsymbol{x}_{0}\boldsymbol{x}_{0}^{T}$ is a lifted rank-one matrix from the signal $\bx_0$ of interest. On the other hand, they could arise by design, such as from the covariance sketching scheme considered in \cite{chen2015exact}, where $z_i$ is aggregated from squared intensity measurements of $L$ data samples of a zero-mean ergodic data stream $\{\bx_l\}_{l=1}^{\infty}$ as
\begin{equation}\label{covariance_sketching}
z_i = \frac{1}{L} \sum_{l=1}^{L} \left|\langle \ba_i, \bx_{l}\rangle\right|^2 = \ba_i^T\left(\frac{1}{L} \sum_{l=1}^{L}\bx_{l}\bx_{l}^T \right) \ba_i \approx \ba_i^T\bX_0\ba_i.
\end{equation}
\yc{Here, $\bX_0=\mathbb{E}[\bx_l\bx_l^T]$ corresponds to the covariance matrix of the data when $L$ is sufficiently large, and the goal of covariance sketching is to recover the covariance matrix $\bX_0$ from the set of measurements $\{z_i\}_{i=1}^m$. In many applications such as array signal processing \cite{scharf1991statistical} and network traffic monitoring \cite{lakhina2004structural}, the covariance matrix of the data can be well approximated by a low-rank PSD matrix, as most of its variance can be explained by the few top principal components.} Last but not least, measurements of low-rank PSD matrices in the form of \eqref{general_model} also occur in a number of applications such as quantum state tomography \cite{gross2010quantum}, compressive power spectrum estimation \cite{ariananda2012compressive}, non-coherent direction-of-arrival estimation from magnitude measurements \cite{kim2015non}, synthetic aperture radar imaging \cite{mason2015passive}, and so on.



It is natural to ask if it is possible to recover the low-rank PSD matrix $\bX_0$ in \eqref{general_model} from an information-theoretically optimal number of measurements in a computationally efficient manner. A popular approach is based on convex relaxation \cite{chen2015exact}, which seeks the PSD matrix with the smallest trace norm while satisfying the observation constraint. It is shown in \cite{chen2015exact} that this algorithm exactly recovers all rank-$r$ PSD matrices as soon as the number of measurements exceeds the order of $nr$ in the absence of noise, and the recovery is stable against bounded noise as well.


\subsection{Our Goal and Contributions}
In this paper, we focus on robust recovery of the low-rank PSD matrix when the measurements in \eqref{general_model} are further corrupted by outliers, possibly adversarial with arbitrary amplitudes. In signal processing applications, outliers are somewhat inevitable, which may be caused by sensor failures, malicious attacks, or reading errors. In the application of covariance sketching, as in \eqref{covariance_sketching}, a sufficient aggregation length $L$ is necessary in order for each measurement $z_i$ to be well approximated by \eqref{general_model}. Measurements which are not aggregated from a large enough $L$ may be regarded as outliers. Therefore, it becomes critical to address robust recovery of $\bX_0$ in the presence of outliers. Fortunately, it is reasonable to assume that the number of outliers is usually much smaller than the number of total measurements, making it possible to leverage the sparsity of the outliers to faithfully recover the low-rank PSD matrix of interest.

We first propose a convex optimization algorithm that seeks the PSD matrix that minimizes the $\ell_{1}$-norm of the measurement residual, where the $\ell_1$-norm is adopted to promote outlier sparsity. The proposed convex program is free of tuning parameters and eliminates the need for trace minimization, a popular convex surrogate for low-rank matrix recovery, by only enforcing the PSD constraint. Neither does it require the knowledge of the outliers, even their existence. When the sensing vectors are composed of i.i.d. standard Gaussian entries, we establish that for a fixed $n\times n$ rank-$r$ PSD matrix, as long as the number of measurements exceeds the order of $nr^{2}$, the proposed convex program can exactly recover it with high probability, even when a fraction of an order of $1/r$ measurements are arbitrarily corrupted. Our measurement complexity is order-wisely near-optimal up to a factor of $r$, and is near-optimal in the rank-one case up to a constant factor. Furthermore, the recovery is also stable against additive bounded noise. While the proposed convex program coincides with a version of the PhaseLift algorithm \cite{candes2012solving,demanet2012stable,hand2016phaselift} studied in the literature for phase retrieval, our work provides its first theoretical performance guarantee to recover low-rank PSD matrices in the presence of arbitrary outliers. Moreover, we show the proposed approach can be easily extended to recover low-rank Toeplitz PSD matrices via numerical simulations.

To further reduce the computational burden when facing large-scale problems, we next develop a non-convex algorithm based on {\em subgradient descent} when the rank of the PSD matrix, or an upper bound of it, is known a priori. Since any rank-$r$ PSD matrix can be uniquely decomposed as $\bX_0=\bU_0\bU_0^T$, where $\bU_0\in\mathbb{R}^{n\times r}$ up to some orthonormal transformations, it is sufficient to recover $\bU_0$ without constructing the PSD matrix explicitly. The subgradient descent algorithm then iteratively updates the estimate by descending along the subgradient of the $\ell_1$-norm of the measurement residual using a properly selected step size and spectral initialization. We conduct extensive numerical experiments to demonstrate its excellent empirical performance, and compare it against the convex program proposed above as well as other alternative approaches in the literature.

\subsection{Organization}

The rest of the paper is organized as below. Section~\ref{sec:convex_algorithm} presents the proposed convex optimization algorithm and its corresponding performance guarantee, where detailed comparisons to related work are presented. Section~\ref{sec:non_convex} describes the proposed non-convex subgradient descent algorithm that is computationally efficient with excellent empirical performance. Numerical examples are provided in Section~\ref{sec:numerical}. The proof of the main theorem is given in Section~\ref{sec:proofs}. Finally, we conclude in Section~\ref{sec:conclusion}.

\section{Parameter-Free Convex Relaxation}\label{sec:convex_algorithm}
\subsection{Problem Formulation}
Let $\bX_0\in\mathbb{R}^{n\times n}$ be a rank-$r$ PSD matrix, then the set of $m$ measurements, which may be corrupted by either arbitrary outliers or bounded noise, can be represented as
\begin{equation}\label{measurement_operator}
\boldsymbol{z} = \mathcal{A}(\bX_0) + \bbeta + \bw,
\end{equation}
where $\boldsymbol{z},\boldsymbol{\beta},\boldsymbol{w}\in\mathbb{R}^{m}$. The linear mapping $\mathcal{A}$: $\mathbb{R}^{n\times n}\to\mathbb{R}^{m}$ is defined as $\mathcal{A}\left(\boldsymbol{X}_{0}\right)=\left\{\boldsymbol{a}_{i}^{T}\boldsymbol{X}_{0}\boldsymbol{a}_{i}\right\}_{i=1}^{m}$, \yc{where $\ba_i\in\mathbb{R}^n$ is the $i$th sensing vector composed of i.i.d. standard Gaussian entries, $i=1,\ldots, m$}. The vector $\bbeta$ denotes the outlier vector, which is assumed to be sparse whose entries can be arbitrarily large. The fraction of nonzero entries is defined as $s :=\left\Vert\boldsymbol{\beta}\right\Vert_{0}/m$. Moreover, the vector $\bw$ denotes the additive noise, which is assumed bounded as $\left\Vert\boldsymbol{w}\right\Vert_{1}\le\epsilon$. Our goal is to robustly recover $\bX_0$ from the measurements $\bz$.

\subsection{Recovery via Convex Relaxation}

To motivate our algorithm, consider the case when only the outlier vector $\bbeta$ is present in \eqref{measurement_operator} and the rank of $\bX_0$ is known. One may seek a rank-$r$ PSD matrix that minimizes the cardinality of the measurement residual to motivate outlier sparsity, given as
\begin{equation}\label{non_convex_outliers}
\hat{\bX} = \argmin_{\bX\succeq 0}\|\boldsymbol{z} - \mathcal{A}(\bX)\|_0, \quad \mbox{s.t.} \quad \mbox{rank}(\bX) = r.
\end{equation}
However, both the cardinality minimization and the rank constraint are NP-hard in general, making this method computationally infeasible. A common approach is to resort to convex relaxation, where we relax the cardinality minimization by its convex relaxation, i.e. the $\ell_1$-norm, and meanwhile, drop the rank constraint, yielding:
\begin{equation}\label{phaselift_outlier}
\yc{(\mbox{Robust-PhaseLift:})}\quad \hat{\bX} = \argmin_{\bX\succeq 0} \|\boldsymbol{z} - \mathcal{A}(\bX)\|_1 .
\end{equation}
\yc{We denote the above convex program as the Robust-PhaseLift algorithm, since it coincides with the PhaseLift algorithm studied in \cite{candes2012solving,demanet2012stable,hand2016phaselift} for phase retrieval\footnote{Note that there are a few different versions of PhaseLift in the literature which are not outlier-robust, therefore we rename \eqref{phaselift_outlier} to Robust-PhaseLift for emphasis.}.} The advantage of Robust-PhaseLift in \eqref{phaselift_outlier} is that it does not require any prior knowledge of the noise bound, the rank of $\bX_0$, nor the sparsity level of the outliers, and is free of any regularization parameter. \yc{It is also worth emphasizing that due to the special rank-one measurement operator, in \eqref{phaselift_outlier} it is possible to only honor the PSD constraint but not motivate the low-rank structure explicitly, via for example, trace minimization\footnote{The interested readers are invited to look up Fig. 1 in \cite{candes2013phaselift} for an intuitive geometric interpretation in the noise-free and outlier-free case. }.}

Encouragingly, we demonstrate that the algorithm~\eqref{phaselift_outlier} admits robust recovery of a rank-$r$ PSD matrix as soon as the number of measurements is large enough, even with a fraction of arbitrary outliers in Theorem~\ref{main}. To the best of our knowledge, this is the first theoretical performance guarantee of the robustness of \eqref{phaselift_outlier} with respect to arbitrary outliers in the low-rank setting. Our main theorem is given as below.

\begin{theorem}\label{main}
Suppose that $\|\bw\|_1\leq \epsilon$ and $s=\left\Vert\boldsymbol{\beta}\right\Vert_{0}/m$. Assume the support of $\bbeta$ is selected uniformly at random with the signs of its nonzero entries generated from the Rademacher distribution as $\mathbb{P}\left\{\mathrm{sgn}\left(\beta_{i}\right)=-1\right\} = \mathbb{P}\left\{\mathrm{sgn}\left(\beta_{i}\right)=1\right\} = 1/2$ for each $i \in \supp(\bbeta)$. Then for a fixed rank-$r$ PSD matrix $\boldsymbol{X}_0\in\mathbb{R}^{n\times n}$, there exist some absolute constants $c_{1}>0$ and $0<s_{0}<1$ such that as long as
\begin{equation*}
m \ge c_1nr^{2}, \quad s\leq \frac{s_0}{r},
\end{equation*} 
the solution to \eqref{phaselift_outlier} satisfies
$$ \left\|\hat{\boldsymbol{X}} - \boldsymbol{X}_0 \right\|_{\mathrm{F}}  \leq c_2 \frac{r \epsilon}{m} ,$$
with probability exceeding $1-\exp(-\gamma m/r^{2})$ for some constants $c_2$ and $\gamma$.
\end{theorem}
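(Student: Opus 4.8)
\noindent\emph{Proof strategy.} Write $\bH:=\hat{\bX}-\bX_0$; the goal is to bound $\|\bH\|_{\mathrm{F}}$. Let $\bX_0=\bU_0\bSigma_0\bU_0^{T}$ with $\bU_0\in\mathbb{R}^{n\times r}$, let $T$ be the tangent space at $\bX_0$ to the set of symmetric matrices of rank at most $r$ (so $\dim T=O(nr)$ and $\bX_0\in T$), and let $\mathcal{P}_T,\mathcal{P}_{T^{\perp}}$ be the associated orthogonal projections. Two structural facts are available: (a) since $\hat{\bX}\succeq0$ and $\mathcal{P}_{T^{\perp}}(\bX_0)=0$, the error is PSD off the tangent space — $\mathcal{P}_{T^{\perp}}(\bH)=(\bI-\bU_0\bU_0^{T})\hat{\bX}(\bI-\bU_0\bU_0^{T})\succeq0$, so $\|\mathcal{P}_{T^{\perp}}(\bH)\|_{*}=\mathrm{tr}\,\mathcal{P}_{T^{\perp}}(\bH)$; this is the feature of the PSD constraint that substitutes for trace minimization. (b) $\bH\succeq-\bX_0$ forces the negative part $\bH_{-}$ to be dominated by $\bX_0$, hence $\mathrm{rank}(\bH_{-})\le r$; equivalently, $\|\bH_{-}\|_{*}\le\sqrt{r}\,\|\bH\|_{\mathrm{F}}$. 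Let $\mathcal{C}$ be the (non-convex) cone of symmetric matrices satisfying (a)--(b); every feasible error lies in $\mathcal{C}$.

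\smallskip
\noindent\emph{Step 1: optimality $\Rightarrow$ a residual inequality.} Since $\hat{\bX}$ is feasible and optimal and $\bz-\mathcal{A}(\hat{\bX})=\bbeta+\bw-\mathcal{A}(\bH)$, optimality gives $\|\bbeta+\bw-\mathcal{A}(\bH)\|_{1}\le\|\bbeta+\bw\|_{1}$. Writing $S:=\supp(\bbeta)$, splitting the $\ell_1$-norm over $S$ and $S^{c}$, applying the triangle inequality on each block and using $\|\bw\|_{1}\le\epsilon$ yields
\[
\big\|(\mathcal{A}(\bH))_{S^{c}}\big\|_{1}\ \le\ \big\|(\mathcal{A}(\bH))_{S}\big\|_{1}+2\epsilon,
\qquad\text{hence}\qquad
\|\mathcal{A}(\bH)\|_{1}\ \le\ 2\,\big\|(\mathcal{A}(\bH))_{S}\big\|_{1}+2\epsilon.
\]

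\smallskip
\noindent\emph{Step 2: two uniform estimates for $\mathcal{A}$.} The plan is to prove, each with probability at least $1-\exp(-\gamma m/r^{2})$ once $m\ge c_1nr^{2}$:
(i) a restricted lower $\ell_1$-isometry $\tfrac1m\|\mathcal{A}(\bH)\|_{1}\ge\kappa\,r^{-1/2}\|\bH\|_{*}$ for all $\bH\in\mathcal{C}$; and
(ii) a sparse non-concentration bound $\tfrac1m\|(\mathcal{A}(\bH))_{S}\|_{1}\le\nu(s)\,\|\bH\|_{*}$ for all $|S|\le sm$ and all symmetric $\bH$, with $\nu(s)\lesssim s\log(e/s)$ (the $\log$ originating from the top-$sm$ order statistics of the $m$ quadratic forms $\ba_i^{T}\bH\ba_i$). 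For (i) I would split on the ``trace-heavy'' vs.\ ``trace-light'' dichotomy. If $|\mathrm{tr}\,\bH|\ge\tfrac12\|\bH\|_{*}$ then $\tfrac1m\|\mathcal{A}(\bH)\|_{1}\ge\big|\big\langle\tfrac1m\sum_i\ba_i\ba_i^{T},\bH\big\rangle\big|\ge|\mathrm{tr}\,\bH|-\big\|\tfrac1m\sum_i\ba_i\ba_i^{T}-\bI\big\|\,\|\bH\|_{*}\ge(\tfrac12-\delta)\|\bH\|_{*}$ once $\big\|\tfrac1m\sum_i\ba_i\ba_i^{T}-\bI\big\|\le\delta$ (matrix Bernstein, $m\gtrsim n$). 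If $|\mathrm{tr}\,\bH|<\tfrac12\|\bH\|_{*}$ then $\|\bH_{+}\|_{*}\asymp\|\bH_{-}\|_{*}$, so by (b) $\|\bH\|_{*}\lesssim\sqrt{r}\,\|\bH\|_{\mathrm{F}}$ and it suffices to show $\tfrac1m\|\mathcal{A}(\bH)\|_{1}\gtrsim\|\bH\|_{\mathrm{F}}$; here one uses the PSD collapse $\tfrac1m\|\mathcal{A}(\mathcal{P}_{T^{\perp}}\bH)\|_{1}=\big\langle\tfrac1m\sum_i\ba_i\ba_i^{T},\mathcal{P}_{T^{\perp}}\bH\big\rangle\ge(1-\delta)\|\mathcal{P}_{T^{\perp}}\bH\|_{*}$ together with a two-sided $\ell_1$-RIP on the $O(nr)$-dimensional space $T$ (via an $\epsilon$-net of its unit Frobenius ball and anti-concentration of $\ba_i^{T}\bG\ba_i$ — a small-ball/Paley--Zygmund estimate, the $\ell_1$ rather than $\ell_2$ loss precluding a one-line Gaussian-width bound), plus a short case analysis on whether $\mathcal{P}_T\bH$ or $\mathcal{P}_{T^{\perp}}\bH$ carries the bulk of $\|\bH\|_{\mathrm{F}}$. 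For (ii) one union-bounds over the $\binom{m}{sm}\le(e/s)^{sm}$ choices of $S$ against a tail bound for $\sum_{i\in S}(\ba_i^{T}\bv)^{2}$ uniform over unit $\bv$ (again $m\gtrsim n$), then extends to general symmetric $\bH$ through its eigendecomposition.

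\smallskip
\noindent\emph{Step 3: putting it together, and the main obstacle.} Feeding (ii) into the inequality of Step 1 gives $\|\mathcal{A}(\bH)\|_{1}\le2\nu(s)\,m\|\bH\|_{*}+2\epsilon$, while (i) gives $\|\mathcal{A}(\bH)\|_{1}\ge\kappa\,r^{-1/2}m\|\bH\|_{*}$. Choosing $s_0$ small enough that $s\le s_0/r$ forces $\nu(s)\le\tfrac12\kappa\,r^{-1/2}$, the term $2\nu(s)m\|\bH\|_{*}$ is absorbed on the left, giving $\|\bH\|_{*}\lesssim\sqrt{r}\,\epsilon/m$ and hence $\|\bH\|_{\mathrm{F}}\le\|\bH\|_{*}\le c_2\,r\epsilon/m$ (with $\epsilon=0$ yielding exact recovery $\hat{\bX}=\bX_0$). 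It remains to collect probabilities: the $T$-component is handled on an $\epsilon$-net of cardinality $e^{O(nr)}$, the sparsity patterns by a union over $\binom{m}{sm}=e^{O(sm\log(e/s))}$ sets, and each underlying concentration event must hold at the $\Theta(r^{-1/2})$ resolution of the isometry constant while the quadratic forms $\ba_i^{T}\bG\ba_i$ have variance proxy of order $r$ for rank-$O(r)$ test matrices; together these produce the sample size $m\ge c_1nr^{2}$, the fraction $s\le s_0/r$, and the failure probability $\exp(-\gamma m/r^{2})$. The main obstacle is precisely estimate (i) in the trace-light regime: it is a uniform lower bound over a non-convex cone, in which the $\mathcal{P}_{T^{\perp}}$-component is tamed only by the PSD-to-trace collapse while the $O(nr)$-dimensional $\mathcal{P}_T$-component demands a small-ball/Mendelson-type analysis of the quadratic-form empirical process; controlling this uniformly at the $r^{-1/2}$ scale — and, in the same breath, the interaction between $\mathcal{A}(\mathcal{P}_T\bH)$ and the random location and signs of $\bbeta$ — is what drives every appearance of $r$ in the statement.
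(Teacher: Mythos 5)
Your Step 1 and the two structural facts about the error cone are correct and coincide with the opening of the paper's argument (the paper likewise derives $\|\cA_{\cS^{\perp}}(\bH)\|_1\le\|\cA_{\cS}(\bH)\|_1+2\|\bw_{\cS^{\perp}}\|_1$ and exploits $\bH_{T^{\perp}}\succeq 0$). From there the routes diverge: the paper does \emph{not} attempt a uniform lower $\ell_1$-isometry over the error cone, but instead builds an approximate dual certificate $\bY=\cA^{*}(\boldsymbol{\mu})$ with $\bY_{T^{\perp}}\preceq-\frac1r\bI_{T^{\perp}}$, $\|\bY_T\|_{\mathrm{F}}\le\frac{1}{13r}$, and $\mu_i=\frac{9}{m}\sgn(\beta_i)$ on $\supp(\bbeta)$, which yields the upper bound $\mathrm{Tr}(\bH_{T^{\perp}})\le\frac{1}{13}\|\bH_T\|_{\mathrm{F}}+\frac{18r\epsilon}{m}$; paired with the RIP-type lower bound $\mathrm{Tr}(\bH_{T^{\perp}})\gtrsim\|\bH_T\|_{\mathrm{F}}-\frac{2\epsilon}{m}$, this pincer forces $\|\bH_T\|_{\mathrm{F}}\lesssim r\epsilon/m$. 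The certificate is also exactly where the hypotheses you never invoke --- random support and Rademacher signs of $\bbeta$ --- are consumed: the block $\frac{9}{m}\sum_{j\in\cS}\sgn(\beta_j)\ba_j\ba_j^{T}$ concentrates only because its summands are zero-mean. Your plan, which replaces this by a bound uniform over all supports and signs, would prove a strictly stronger adversarial-outlier theorem; that alone should signal that the entire burden has been pushed onto your estimate (i).

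And estimate (i), as sketched, does not close. In the trace-light case you propose to combine the PSD collapse $\frac1m\|\cA(\bH_{T^{\perp}})\|_1\approx\mathrm{Tr}(\bH_{T^{\perp}})$ with a two-sided $\ell_1$-RIP on $T$ and a case split on which component dominates. Take $\bH_T=-\frac{\lambda}{r}\bU\bU^{T}$ and $\bH_{T^{\perp}}=\frac{\lambda}{n-r}(\bI-\bU\bU^{T})$: this lies in your cone, has $\mathrm{tr}\,\bH=0$, nuclear norm $2\lambda$, and $\|\bH_T\|_{\mathrm{F}}=\lambda/\sqrt{r}$ while $\mathrm{Tr}(\bH_{T^{\perp}})=\lambda$. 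One triangle direction gives $\kappa'\lambda/\sqrt{r}-(1+\delta)\lambda<0$ (the attainable lower $\ell_1$-isometry constant $\kappa'$ on $T$ is well below $1$; the paper uses $\frac15(1-\frac1{12})$), and the other gives $(1-\delta)\lambda-(1+\delta)\lambda<0$, so the case analysis yields nothing in the whole regime $\|\bH_T\|_{\mathrm{F}}\ll\mathrm{Tr}(\bH_{T^{\perp}})\lesssim\sqrt{r}\,\|\bH_T\|_{\mathrm{F}}$. The truth here is $\cA(\bH)_i\approx\lambda\left(1-\tfrac1r\chi^2_r\right)$, so $\frac1m\|\cA(\bH)\|_1\asymp\lambda/\sqrt{r}$: the two halves of the measurement genuinely cancel down to exactly the $r^{-1/2}$ scale you need, which means no argument that separates $\cA(\bH_T)$ from $\cA(\bH_{T^{\perp}})$ can work --- you must prove a small-ball estimate for the sum $\ba_i^{T}\bH\ba_i$ jointly over the non-convex cone and run Mendelson's method on it. That is plausible (and would naturally reproduce $m\gtrsim nr^{2}$ from the $r^{-1/2}$ small-ball level against a Rademacher complexity of order $\sqrt{rn/m}$), but it is the entire theorem; you correctly identify it as the main obstacle and leave it unproved, whereas the paper circumvents it with the dual certificate.
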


Theorem~\ref{main} has the following consequences.
\begin{itemize}
\item \textbf{Exact Recovery with Outliers:} When $\epsilon=0$, Theorem~\ref{main} suggests the recovery is exact using Robust-PhaseLift \eqref{phaselift_outlier}, i.e. $\hat{\bX}=\bX_0$ even when a fraction of measurements are arbitrarily corrupted, as long as the number of measurements $m$ is on the order of $nr^{2}$. Given there are at least $nr$ unknowns, our measurement complexity is near-optimal up to a factor of $r$.
\item \textbf{Stable Recovery with Bounded Noise:} In the presence of bounded noise, Theorem~\ref{main} suggests that the recovery performance decreases gracefully with the increase of $\epsilon$, where the Frobenius norm of the reconstruction error is proportional to the per-entry noise level of the measurements.
\item \textbf{Phase Retrieval:} When $r=1$, the problem degenerates to the case of phase retrieval, and Theorem~\ref{main} recovers existing results in \cite{hand2016phaselift} for outlier-robust phase retrieval, where the measurement complexity is on the order of $n$, which is optimal up to a scaling factor.
\end{itemize}

\yc{Let us denote $\hat{\boldsymbol{X}}_r=\argmin_{\mbox{rank}(\boldsymbol{Z})=r,\boldsymbol{Z}\succeq 0} \| \hat{\boldsymbol{X}}- \boldsymbol{Z}\|_{\mathrm{F}}$ as the best rank-$r$ PSD matrix approximation of $\hat{\bX}$, the solution to \eqref{phaselift_outlier}. Then Theorem~\ref{main} suggests that the estimate $\hat{\bX}$ can be well approximated by a rank-$r$ PSD matrix since
 $$ \| \hat{\boldsymbol{X}} - \hat{\boldsymbol{X}}_r \|_F \leq \| \hat{\boldsymbol{X}} - \boldsymbol{X}_0 \|_F\leq c_2 \frac{r\epsilon}{m},$$
as long as the number of measurements is sufficiently large. Furthermore, we have
\begin{align*}
\|\hat{\boldsymbol{X}}_r - \boldsymbol{X}_0 \|_F  & \leq \|\hat{\boldsymbol{X}}_r - \hat{\boldsymbol{X}}\|_F + \|\hat{\boldsymbol{X}} - \boldsymbol{X}_0 \|_F \\
& \leq 2 \|\hat{\boldsymbol{X}} - \boldsymbol{X}_0 \|_F \leq 2c_2\frac{r\epsilon}{m},
\end{align*}
indicating that $\hat{\boldsymbol{X}}_r $ provides an accurate estimate of $\bX_0$ that is exactly rank-$r$ and PSD.}

\subsection{Comparisons to Related Work}

In the absence of outliers, the PhaseLift algorithm in the following form
\begin{equation}\label{phaselift_constraint}
 \min_{\bX\succeq 0} \mbox{Tr}(\boldsymbol{X}) \quad \mbox{s.t.} \quad \|\boldsymbol{z} - \mathcal{A}(\bX)\|_1\leq \epsilon ,
\end{equation}
where $\mbox{Tr}(\bX)$ denotes the trace of $\bX$, has been proposed to solve the phase retrieval problem \cite{candes2013phaselift,candes2012solving,candes2013phase}. Later the same algorithm has been employed to recover low-rank PSD matrices in \cite{chen2015exact}, where an order of $nr$ measurements obtained from i.i.d. sub-Gaussian sensing vectors are shown to guarantee exact recovery in the noise-free case and stable recovery with bounded noise. One problem with the algorithm \eqref{phaselift_constraint} is that the noise bound $\epsilon$ is assumed known. Furthermore, it is not amenable to handle outliers, since $\| \boldsymbol{z} - \cA(\bX_0)\|_1$ can be arbitrarily large with outliers and consequently the ground truth $\bX_0$ quickly becomes infeasible for \eqref{phaselift_constraint}. 

The proposed algorithm \eqref{phaselift_outlier} is studied in \cite{candes2012solving,demanet2012stable,hand2016phaselift} as a variant of PhaseLift for phase retrieval, corresponding to the case where $\bX_0=\bx_0\bx_0^T$ is rank-one. It is shown in \cite{demanet2012stable,candes2012solving} that with $\mathcal{O}(n)$ i.i.d. Gaussian sensing vectors, the algorithm succeeds with high probability. Compared with \eqref{phaselift_constraint}, the algorithm \eqref{phaselift_outlier} eliminates trace minimization and leads to easier algorithm implementations. We note that \cite{kabanava2015stable} also considers a regularization-free algorithm for PSD matrix estimation that minimizes the $\ell_2$-norm of the residual, which unfortunately, cannot handle outliers as Robust-PhaseLift \eqref{phaselift_outlier}. Hand \cite{hand2016phaselift} first considered the robustness of the Robust-PhaseLift algorithm \eqref{phaselift_outlier} in the presence of outliers for phase retrieval, establishing that the same guarantee holds even with a constant fraction of outliers. Our work extends the performance guarantee in \cite{hand2016phaselift} to the general low-rank PSD matrix case. 

Broadly speaking, our problem is related to low-rank matrix recovery from an under-determined linear system \cite{recht2010guaranteed,dai2011subspace,wang2011unique}, where the linear measurements are drawn from inner products with rank-one sensing matrices. It is due to this special structure of the sensing matrices that we can eliminate the trace minimization, and only consider the feasibility constraint for PSD matrices. Standard approaches for separating low-rank and sparse components \cite{CanLiMaWri09,chandrasekaran2011rank,wright2013compressive,li2011compressed,mateos2012robust} via convex optimization are given as
\begin{equation*}
\min_{\bX\succeq 0,\ \bbeta} \mbox{Tr}(\bX) + \lambda \|\bbeta\|_1, \quad \mbox{s.t.} \quad \|\bz- \mathcal{A}(\bX)-\bbeta\|_1 \leq \epsilon,
\end{equation*}
where $\lambda$ is a regularization parameter that requires to be tuned properly. In contrast, the formulation \eqref{phaselift_outlier} is parameter-free.


\begin{figure*}[ht]
\begin{center}
\begin{tabular}{ccc}
 {no outliers} &modest outlier amplitudes &large outlier amplitudes \\
\includegraphics[width=0.3\textwidth]{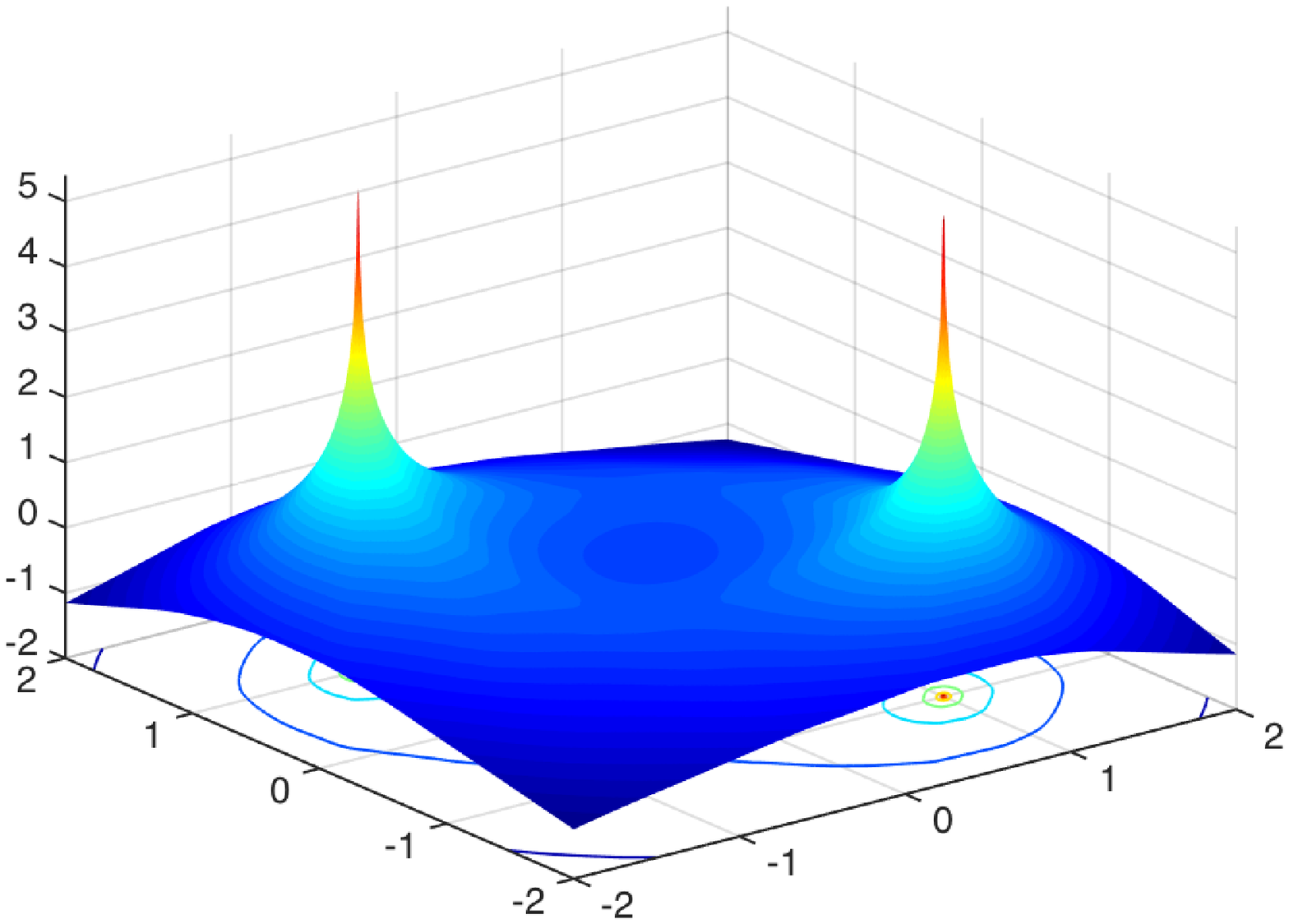} &\includegraphics[width=0.3\textwidth]{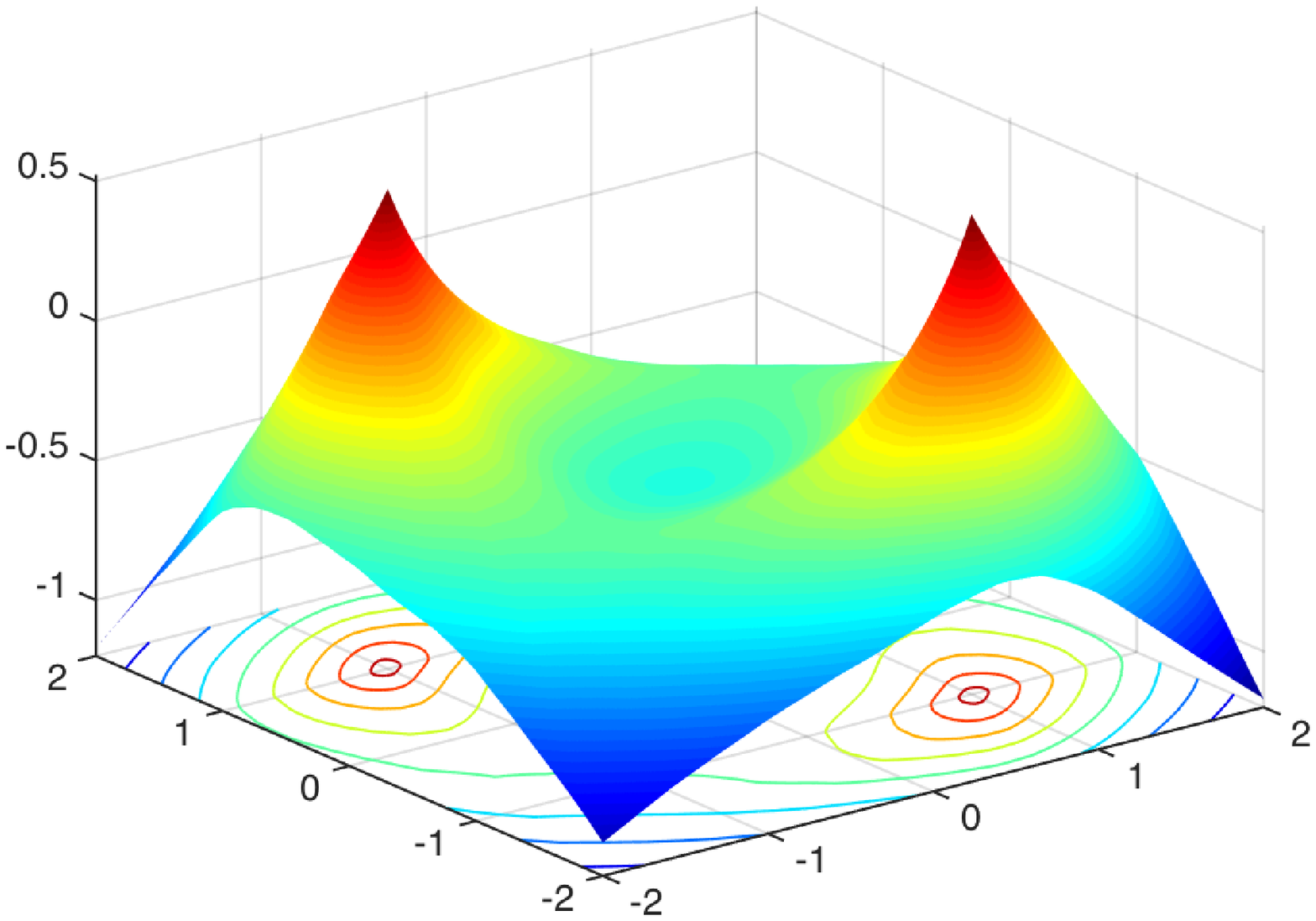} & \includegraphics[width=0.3\textwidth]{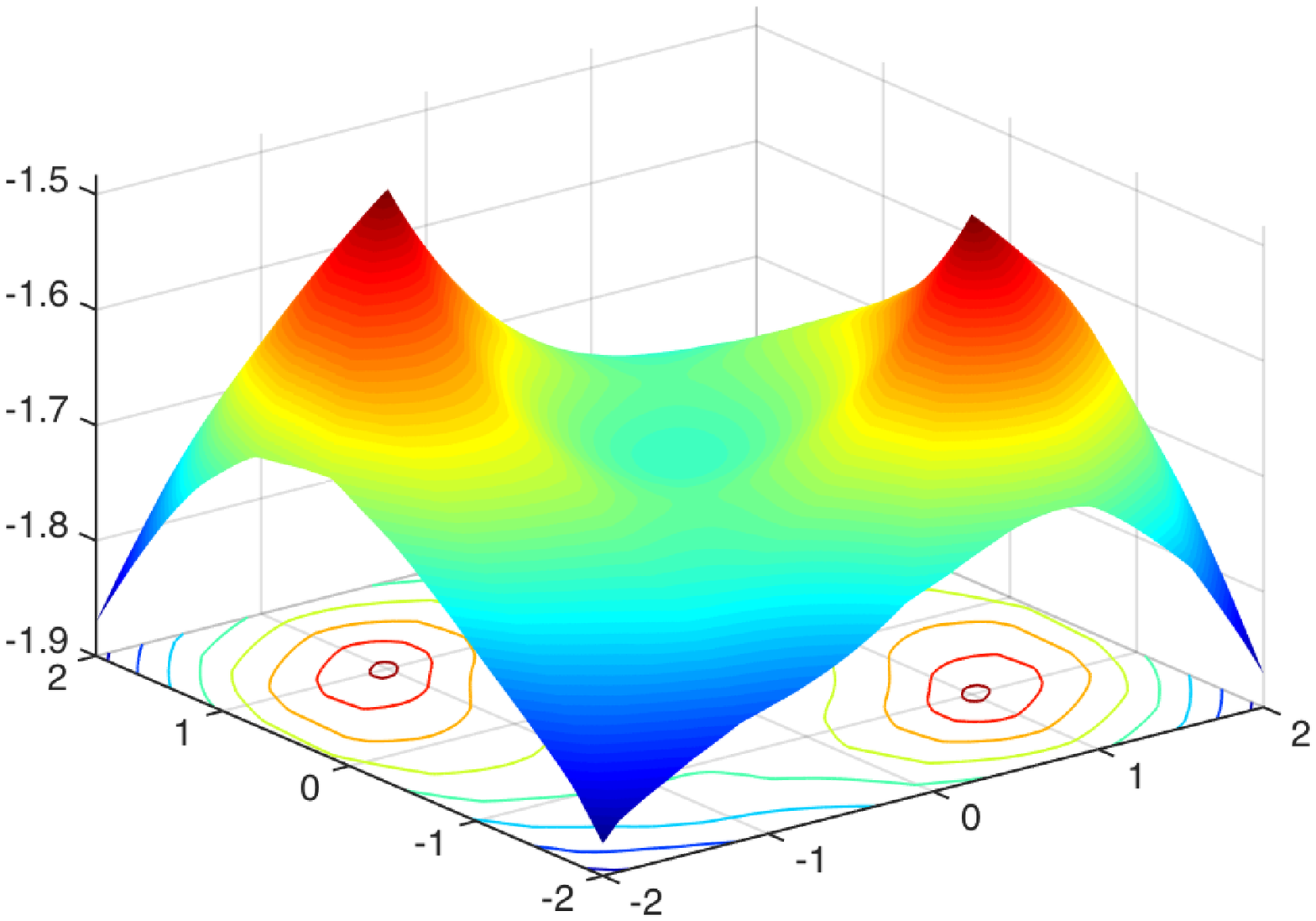}  \\
& $f(\bU)= \frac{1}{2m} \| \bz - \mathcal{A}(\bU\bU^T)\|_1$ & \\
\includegraphics[width=0.3\textwidth]{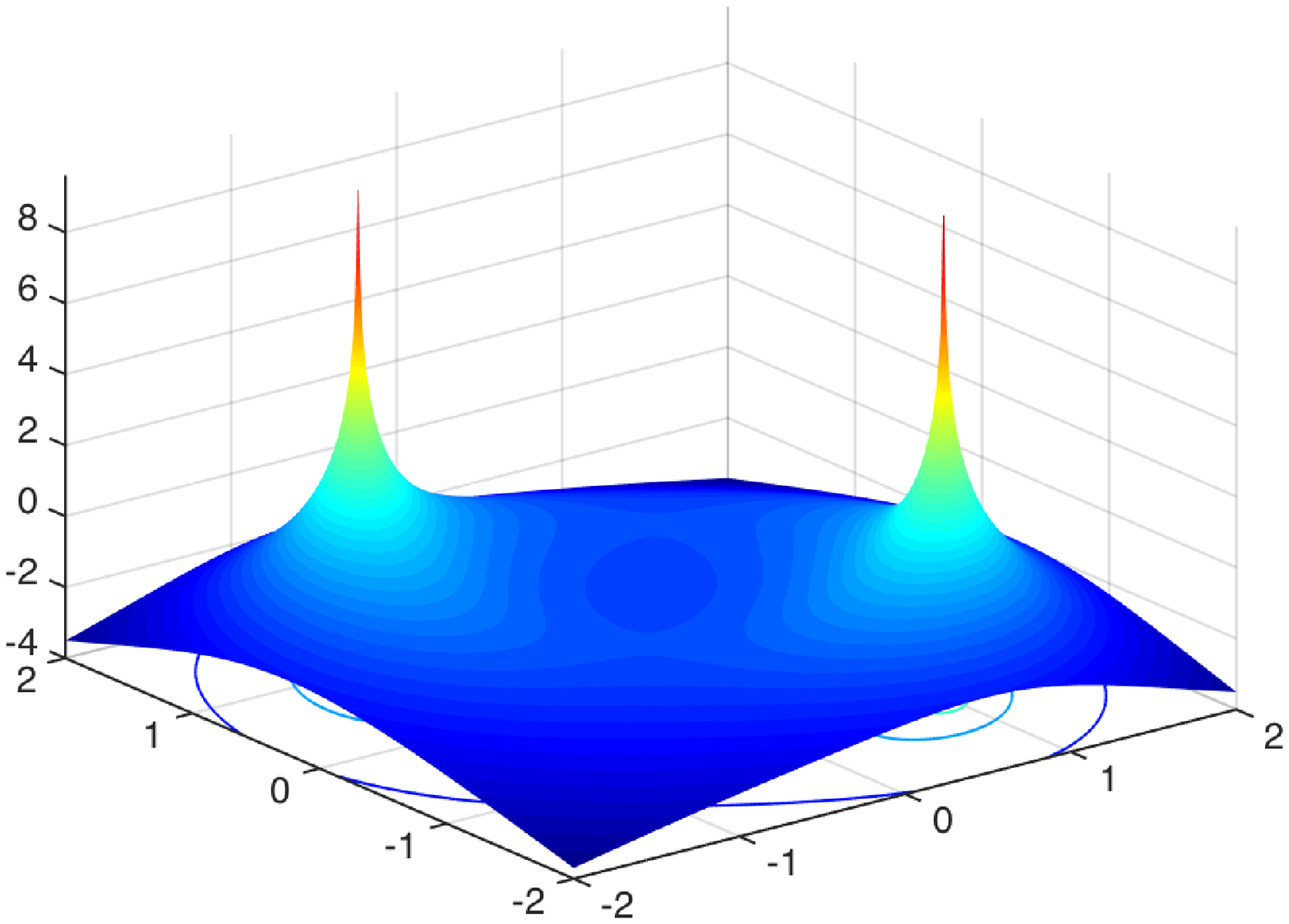}  &
\includegraphics[width=0.3\textwidth]{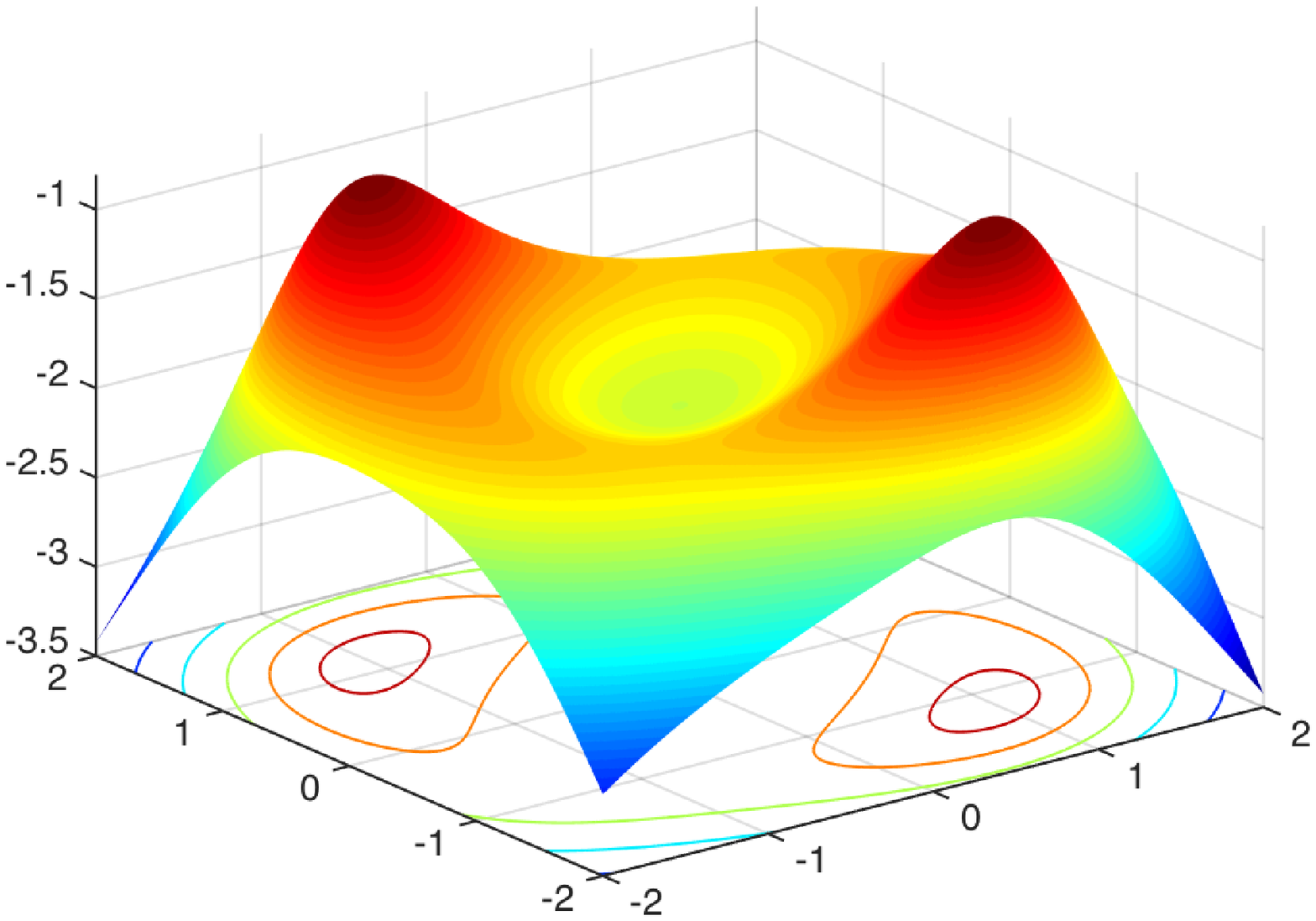} &
\includegraphics[width=0.3\textwidth]{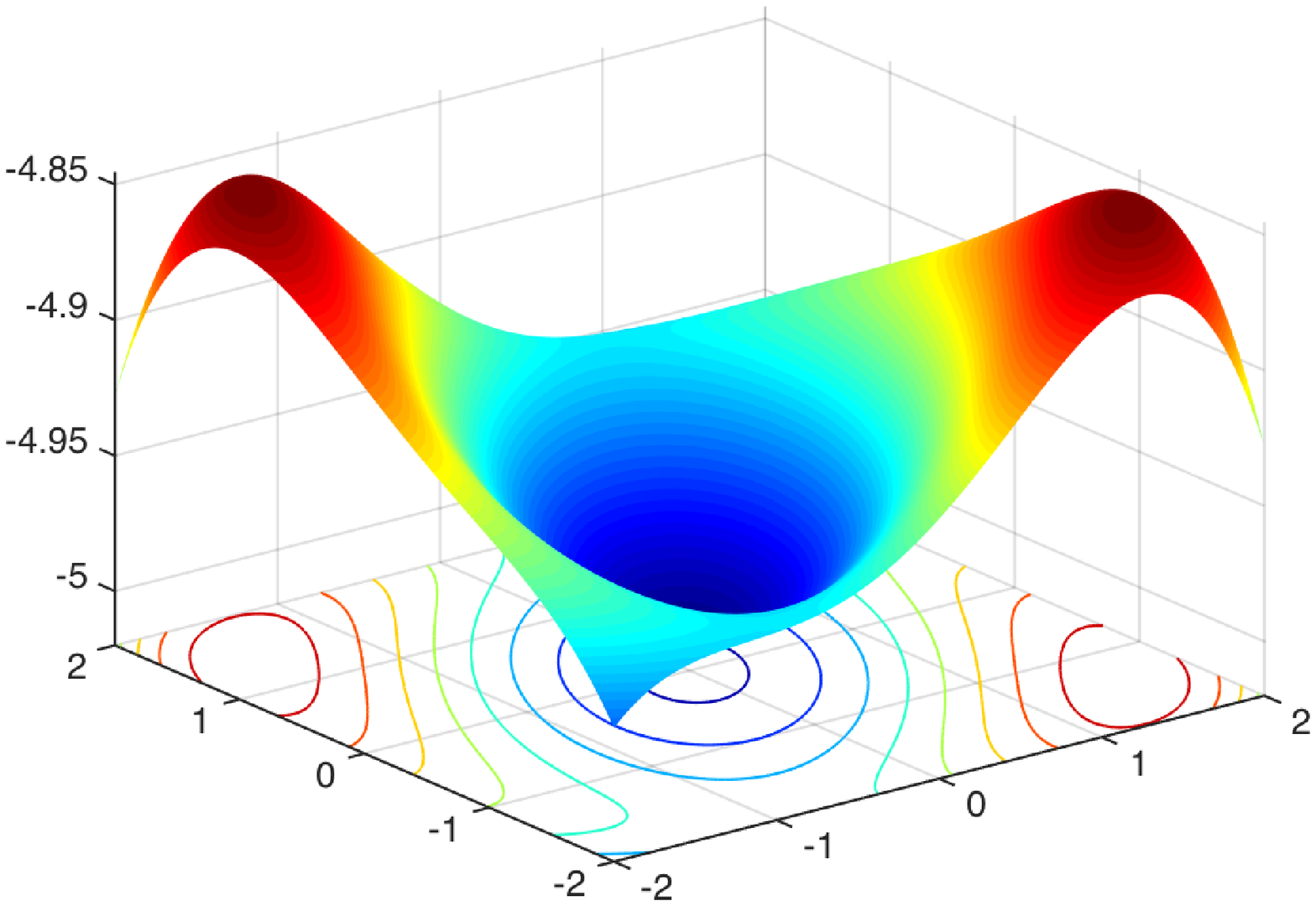} \\
& $g(\bU)= \frac{1}{4m} \| \bz - \mathcal{A}(\bU\bU^T)\|_2^2$ &
\end{tabular}
\end{center}
\caption{Illustrations of the objective function $-\log f(\bU)$ and its $\ell_2$-norm counterpart $-\log g(\bU)$ (in negative logarithmic scales) under different corruption scenarios when $\bU\in\mathbb{R}^{2\times 1}$. The number of measurements is $m=100$ with i.i.d. Gaussian sensing vectors, and the fraction of outliers is $s=0.2$ with uniformly selected support and amplitudes drawn from $\mbox{Unif}[0,10]$ or $\mbox{Unif}[0,100]$. It is interesting to observe that while large outliers completely distort $g(\bU)$, the proposed objective is quite robust with the ground truth being the only global optima of $f(\bU)$. }\label{fig:illustration_nonconvex}
\end{figure*}

\section{A Non-Convex Subgradient Descent Algorithm}\label{sec:non_convex}

\yc{In this section, we propose another algorithm for robust low-rank PSD matrix recovery from corrupted rank-one measurements assuming the rank (or its upper bound) of the PSD matrix $\bX_0$ is known a priori as $r$.} In this case, we can decompose $\bX_0$ as $\bX_0 = \bU_0\bU_0^T$ where $\bU_0\in\mathbb{R}^{n\times r}$. Instead of directly recovering $\bX_0$, we may aim at recovering $\bU_0$ up to orthogonal transforms, since $(\bU_0\bQ)(\bU_0\bQ)^T =\bU_0\bU_0$ for any orthonormal matrix $\bQ\in\mathbb{R}^{r\times r}$. Consider relaxing of the loss function in \eqref{non_convex_outliers} but keeping the rank constraint, we obtain the following problem:
\begin{equation}\label{eq:non_convex_psd} 
\hat{\bX} = \argmin_{\bX\succeq 0}\|\boldsymbol{z} - \mathcal{A}(\bX)\|_1, \quad \mbox{s.t.} \quad \mbox{rank}(\bX) = r.
\end{equation}
Since any rank-$r$ PSD matrix $\bX$ can be written as $\bX = \bU\bU^T$ for some $\bU\in\mathbb{R}^{n\times r}$, 
 \eqref{eq:non_convex_psd} can be equivalently reformulated as
\begin{equation}\label{phaselift_outlier_nonconvex}
\hat{\bU} = \argmin_{\bU\in\mathbb{R}^{n\times r}}  f(\bU)   ,
\end{equation} 
with
\begin{equation*}
f(\bU) = \frac{1}{2m} \left\Vert \boldsymbol{z} - \mathcal{A}(\bU\bU^T)\right\Vert_{1} =\frac{1}{2m}\sum_{i=1}^m  \left\vert z_i - \left\Vert \bU^T\ba_i\right\Vert_2^2 \right\vert.
\end{equation*}
Clearly, \eqref{phaselift_outlier_nonconvex} is no longer convex. To illustrate, the first row of Fig.~\ref{fig:illustration_nonconvex} plots the value of the objective function in the negative logarithmic scale, i.e. $-\log f(\bU)$, under different corruption scenarios when $\bU\in\mathbb{R}^{2\times 1}$. For comparison, the second row of Fig.~\ref{fig:illustration_nonconvex} shows the loss function evaluated in $\ell_2$-norm: $g(\bU)= \frac{1}{4m} \| \bz - \mathcal{A}(\bU\bU^T)\|_2^2$, which is not robust to outliers.

Motivated by the recent non-convex approaches \cite{candes2015phase,ChenCan15,white2015local} of solving quadratic systems, we propose a subgradient descent algorithm to solve \eqref{phaselift_outlier_nonconvex} effectively, working with a non-smooth function $f(\bU)$. Note that a subgradient of $f(\bU)$ with respect to $\bU$ can be given as
\begin{equation}
\partial f(\bU)=-\frac{1}{m}\sum_{i = 1}^m \sgn\left( z_i - \left\Vert \bU^T\ba_i\right\Vert_2^2\right) \ba_i\ba_i^T\bU,
\end{equation}
where the sign function $\sgn(\cdot)$ is defined as
$$ \sgn(x) = \left\{ \begin{array}{cc}
+1, & x>0 \\
0, & x=0 \\
-1, & x<0 \end{array}\right. .$$

Our subgradient descent algorithm proceeds as below. Denote the estimate in the $t$th iteration by $\bU^{(t)}\in\mathbb{R}^{n\times r}$. First, $\bU^{(0)}$ is initialized as the best rank-$r$ approximation of the following matrix with respect to Frobenius norm as
\begin{equation}\label{initialization}  
\bU^{\left(0\right)} \left(\bU^{\left(0\right)}\right)^{T} = \argmin_{\mathrm{rank}\left(\bX\right) = r} \left\Vert \bX - \frac{1}{m}\sum_{i=1}^{m} z_{i} \ba_{i}\ba_{i}^{T} \right\Vert_{\mathrm{F}}^{2}.
\end{equation}
Secondly, at the $(t+1)$th iteration, $t\geq 0$, we apply subgradient descent to refine the estimate as
\begin{equation}\label{iteration}
\bU^{(t+1)} = \bU^{(t)} - \mu_t \cdot \partial f(\bU^{(t)}) ,
\end{equation}
where the step size $\mu_{t}$ is adaptively set as
\begin{equation*}
\mu_{t} = 0.05\times  \max \left\{2^{-t/1000},  10^{-6}\right\},
\end{equation*}
which provide more accurate estimates using fewer iterations in the numerical simulations. The procedure is summarized in Alg.~\ref{algorithm:nonconvex}, 
where the stopping rule in Alg.~\ref{algorithm:nonconvex} is simply put as a maximum number of iterations. 


\begin{algorithm}[htp]
  \caption{Subgradient descent for solving \eqref{phaselift_outlier_nonconvex}}\label{algorithm:nonconvex}
  \textbf{Parameters:} Rank $r$, number of iterations $T_{\max}$, and step size $\mu_t$;\\
  
  \textbf{Input:} Measurements $\bz$, and sensing vectors $\{\ba_i\}_{i=1}^m$;\\

\textbf{Initialization:} Initialize $\bU^{(0)}\in\mathbb{R}^{n\times r}$ via \eqref{initialization}; \\

  \textbf{for} $t=0:T_{\max}-1$
   \textbf{do} \\
   \hspace{0.2in} update $\bU^{(t+1)}$ via \eqref{iteration}; \\
 \textbf{end for}\\
\textbf{Output:} $\hat{\bU}= \bU^{(T_{\max})}$.
\end{algorithm}

The main advantage of Alg.~\ref{algorithm:nonconvex} is its low memory and computational complexity. Given that it does not construct the full PSD matrix, the memory complexity is simply the size of $\bU^{(t)}$, which is on the order of $nr$. The computational complexity per iteration is also low, which is on the order of $mnr$, that is linear in all the parameters. We demonstrate the excellent empirical performance of Alg.~\ref{algorithm:nonconvex} in Section~\ref{sec:performance_nonconvex}.

\section{Numerical Examples} \label{sec:numerical}

\subsection{Performance of Convex Relaxation}
We first examine the performance of Robust-PhaseLift in \eqref{phaselift_outlier}. Let $n=40$. We randomly generate a low-rank PSD matrix of rank-$r$ as $\bX_{0}=\bU_{0}\bU_{0}^{T}$, where $\bU_{0}\in\mathbb{R}^{n\times r}$ is composed of i.i.d. standard Gaussian variables. The sensing vectors are also composed of i.i.d. standard Gaussian variables. Each Monte Carlo simulation is called successful if the normalized estimate error satisfies $\|\hat{\bX}-\bX_{0}\|_{\mathrm{F}}/\|\bX_{0}\|_{\mathrm{F}}\leq 10^{-6}$, where $\hat{\bX}$ denotes the solution to \eqref{phaselift_outlier}. For each cell, the success rate is calculated by averaging over $100$ Monte Carlo simulations.
\begin{figure}[h]
\begin{center}
\begin{tabular}{cc}
\hspace{-0.1in}\includegraphics[width=0.25\textwidth]{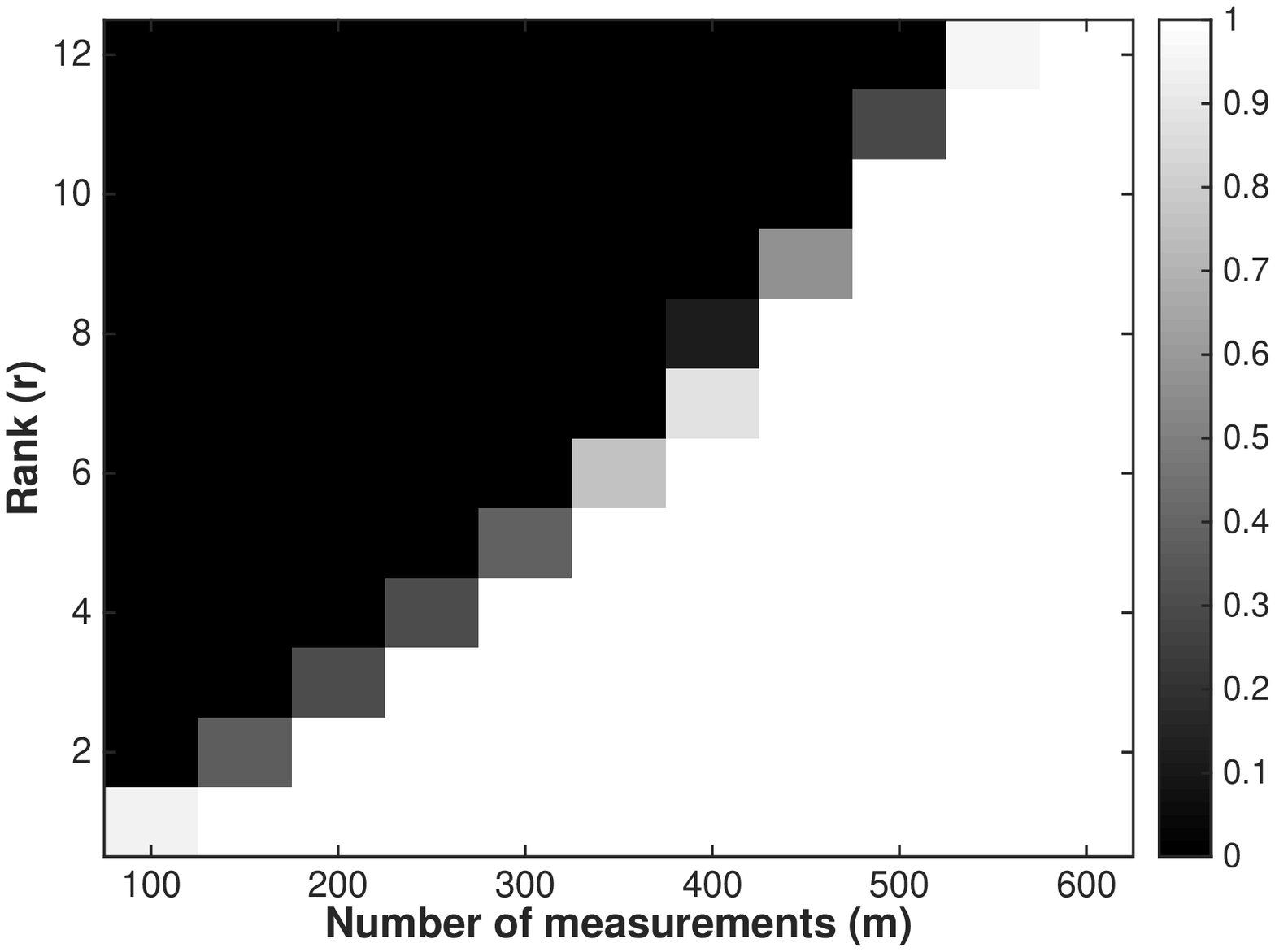} &
\hspace{-0.2in}\includegraphics[width=0.25\textwidth]{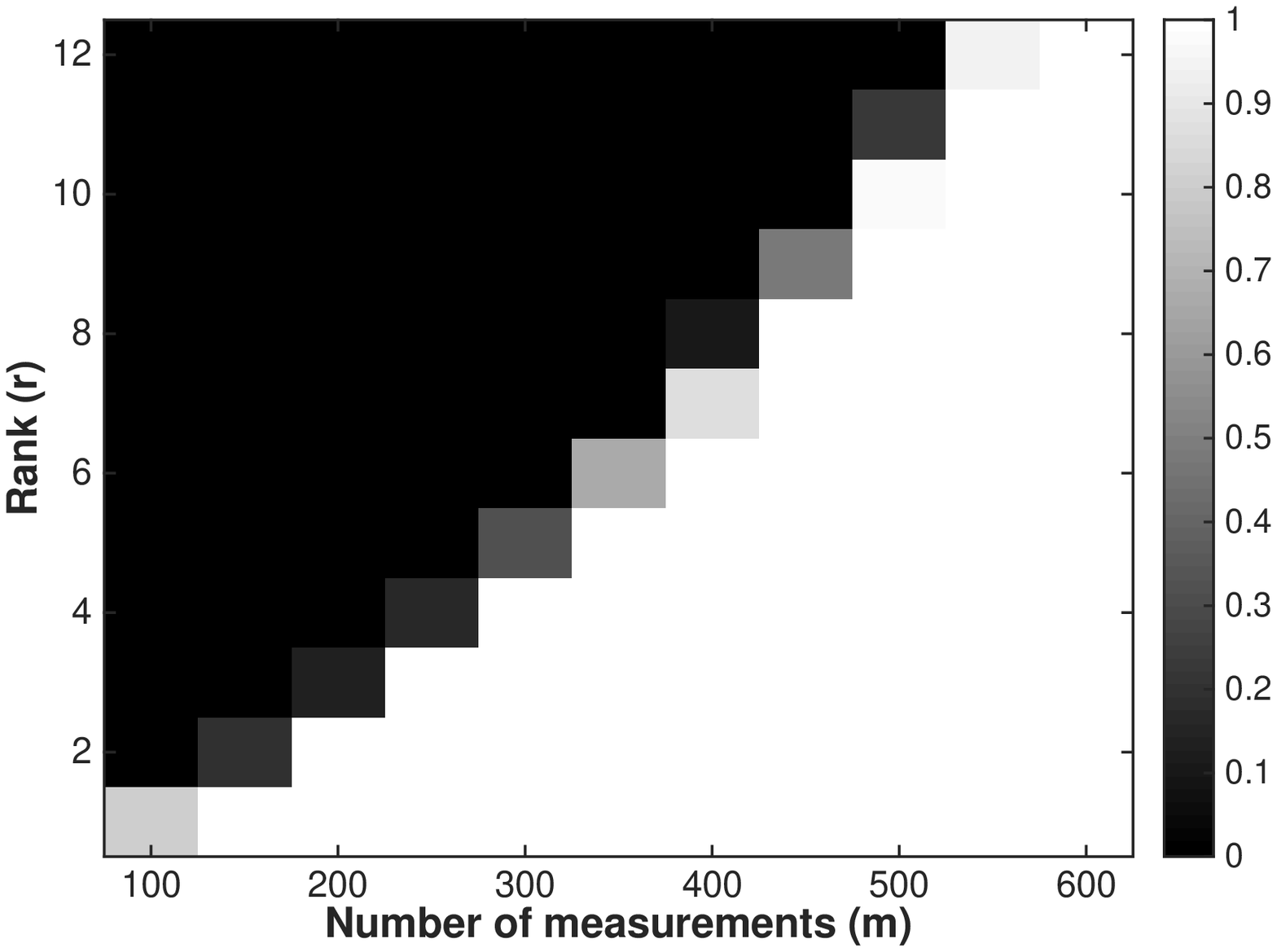} \\
\hspace{-0.1in}(a)   & \hspace{-0.2in} (b)  
\end{tabular}
\end{center}
\caption{Phase transitions for low-rank PSD matrix recovery with respect to the number of measurements and the rank, (a) with trace minimization; and (b) without trace minimization of noise-free measurements, when $n=40$.   }\label{fig_lowrank_psd_rec_fixn_changemr_psdtrace_noutnbd}
\end{figure}

Fig.~\ref{fig_lowrank_psd_rec_fixn_changemr_psdtrace_noutnbd} shows the success rates of algorithms with respect to the number of measurements and the rank, with the trace minimization as in \eqref{phaselift_constraint} in (a); and without the trace minimization as proposed in Robust-PhaseLift \eqref{phaselift_outlier} in (b) for noise-free measurements. It can be seen that the performance of these two algorithms are almost equivalent, confirming a similar numerical observation for the phase retrieval problem \cite{waldspurger2015phase} also holds in the low-rank setting, where trace minimization may be eliminated for low-rank PSD matrix recovery using rank-one measurements.

Fig.~\ref{fig_lowrank_psd_rec_psd} further shows the success rates of the Robust-PhaseLift algorithm (a) with respect to the number of measurements and the rank, when 5\% of measurements are selected uniformly at random and corrupted by standard Gaussian variables; and (b) with respect to the percent of outliers and the rank, for a fixed number of measurements $m=600$. This also suggests possible room for improvements of our theoretical guarantee, as the numerical results indicate that the required measurement complexity for successful recovery has a seemingly linear relationship with $r$.

\begin{figure}[h]
\begin{center}
\begin{tabular}{cc}
\hspace{-0.1in}\includegraphics[width=0.25\textwidth]{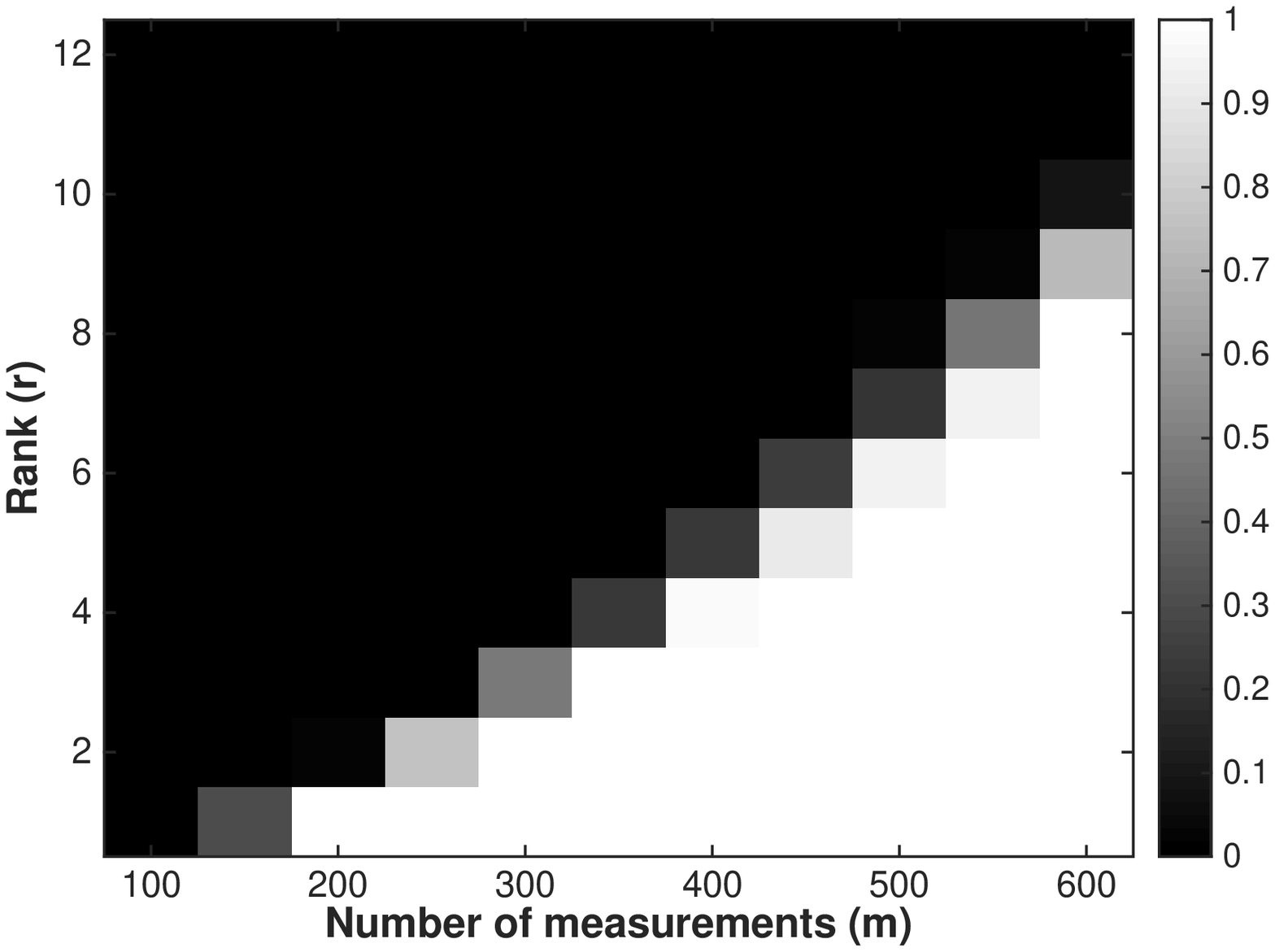} &
\hspace{-0.2in}\includegraphics[width=0.25\textwidth]{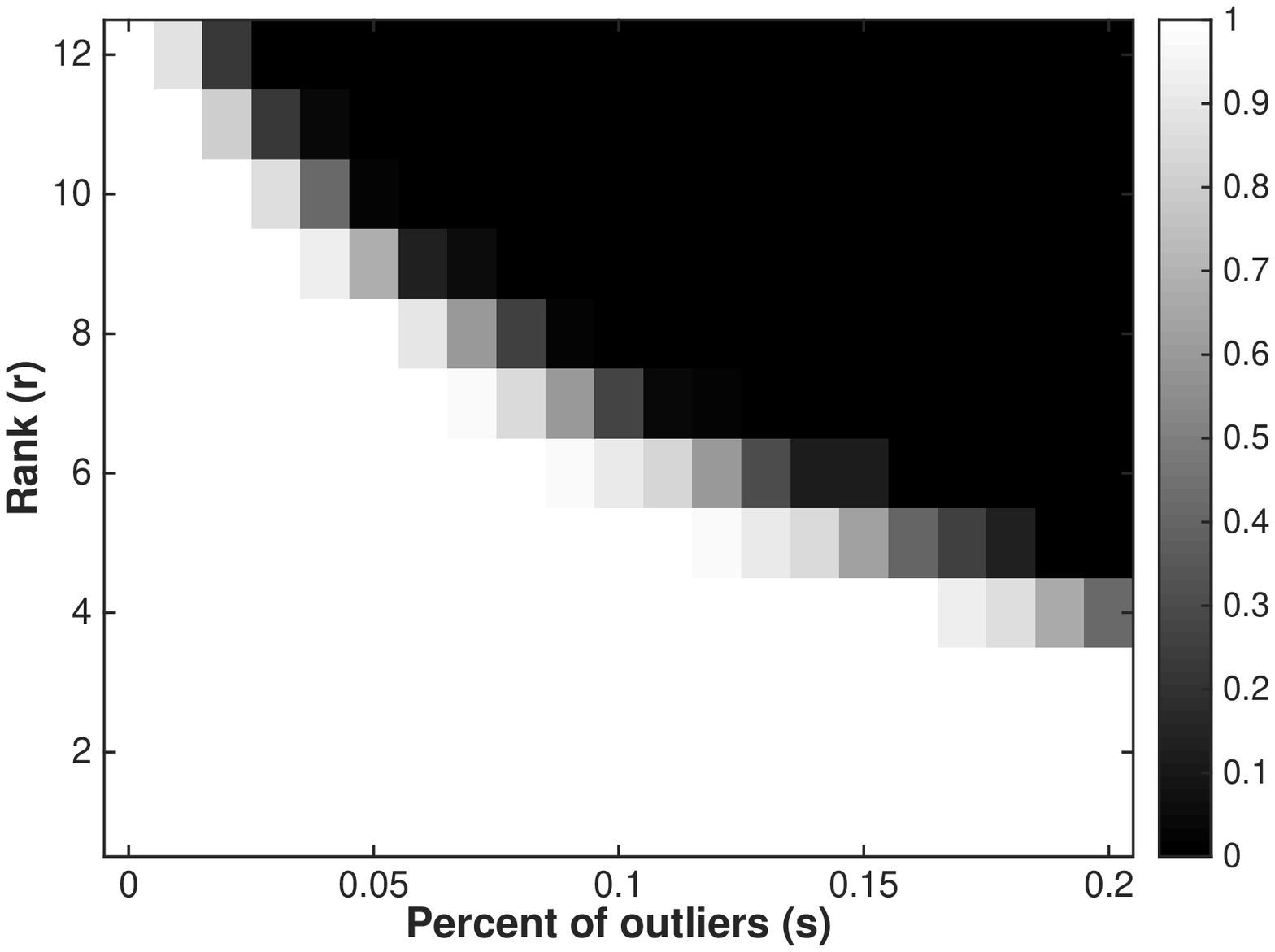} \\
\hspace{-0.1in} (a) & \hspace{-0.2in} (b)
\end{tabular}
\end{center}
\caption{Phase transitions of low-rank PSD matrix recovery with respect to (a) the number of measurements and the rank, with 5\% of measurements corrupted by standard Gaussian variables; (b) the percent of outliers and the rank, when the number of measurements is $m=600$, when $n=40$.}\label{fig_lowrank_psd_rec_psd}
\end{figure}

\subsection{Convex Relaxation with additional Toeplitz Structure}
We next consider robust recovery of low-rank Toeplitz PSD matrices, where we allow complex-valued sensing vectors $\mathcal{A}(\bX) = \{ \ba_i^{H}\bX\ba_i \}_{i=1}^m$ and complex-valued Toeplitz PSD matrices $\bX$. Estimating low-rank Toeplitz PSD matrices is of great interests for array signal processing \cite{abramovich1998positive}. We modify \eqref{phaselift_outlier} by incorporating the Toeplitz constraint as:
\begin{equation}\label{toeplitz_phaselift_outlier}
\hat{\bX} = \argmin_{\bX\succeq 0} \|\boldsymbol{z} - \mathcal{A}(\bX)\|_1,  \; \mbox{s.t.} \;    \bX~\mbox{is Toeplitz}.
\end{equation} 
Let $n=64$, the Toeplitz PSD matrix $\bX_0$ is generated as $\bX_0 =\bV\bSigma\bV^H$, where $\bV=[\bv(f_1),\ldots,\bv(f_r)]\in\mathbb{C}^{n\times r}$ is a Vandermonde matrix with $\bv(f_i)=[1,e^{j2\pi f_i},\ldots, e^{j2\pi (n-1)f_i}]^T$, $f_i\sim \mbox{Unif}[0,1]$, and $\bSigma=\mbox{diag}[\sigma_1^2,\ldots, \sigma_r^2]$, with $\sigma_i^2\sim  \mbox{Unif}[0,1]$. 
Fig.~\ref{fig:rank_vs_outliers_toeplitz} shows the phase transitions of Toeplitz PSD matrix recovery with respect to the number of measurements and the rank without outliers in (a), and when 5\% of measurements are selected uniformly at random and corrupted by standard Gaussian variables in (b). It can be seen that the low-rank Toeplitz PSD matrix can be robustly recovered from a sublinear number of measurements due to the additional Toeplitz structure. \yc{We note that a different covariance sketching scheme is considered in \cite{qiao2015generalized,romero2016compressive,romero2015compression} for estimating low-rank Toeplitz covariance matrices. Though not directly comparable to our measurement scheme, it may benefit from a similar parameter-free convex optimization to handle outliers.}
\begin{figure}[h]
\begin{center}
\begin{tabular}{cc}
\hspace{-0.1in}\includegraphics[width=0.25\textwidth]{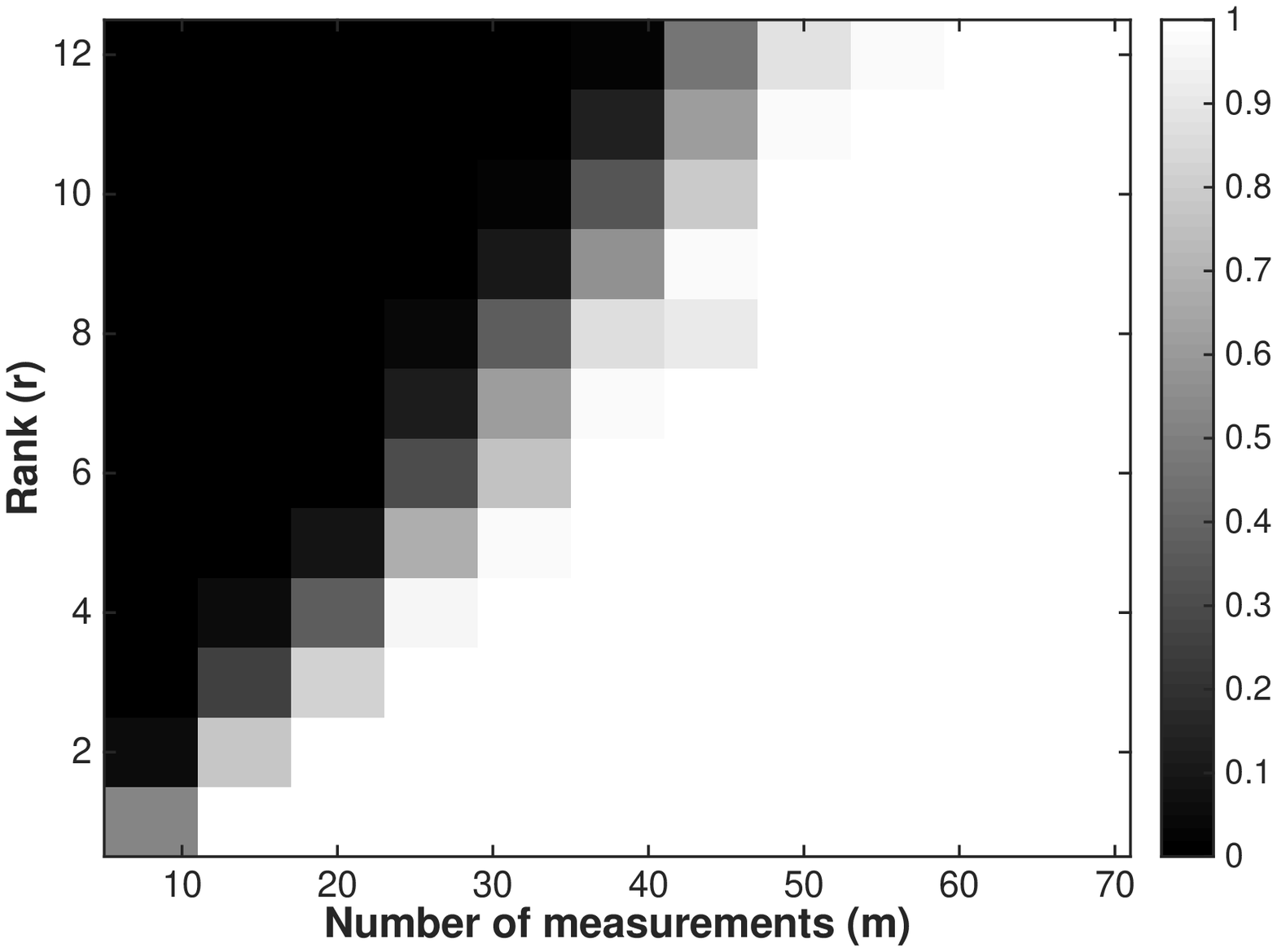} &
\hspace{-0.2in}\includegraphics[width=0.25\textwidth]{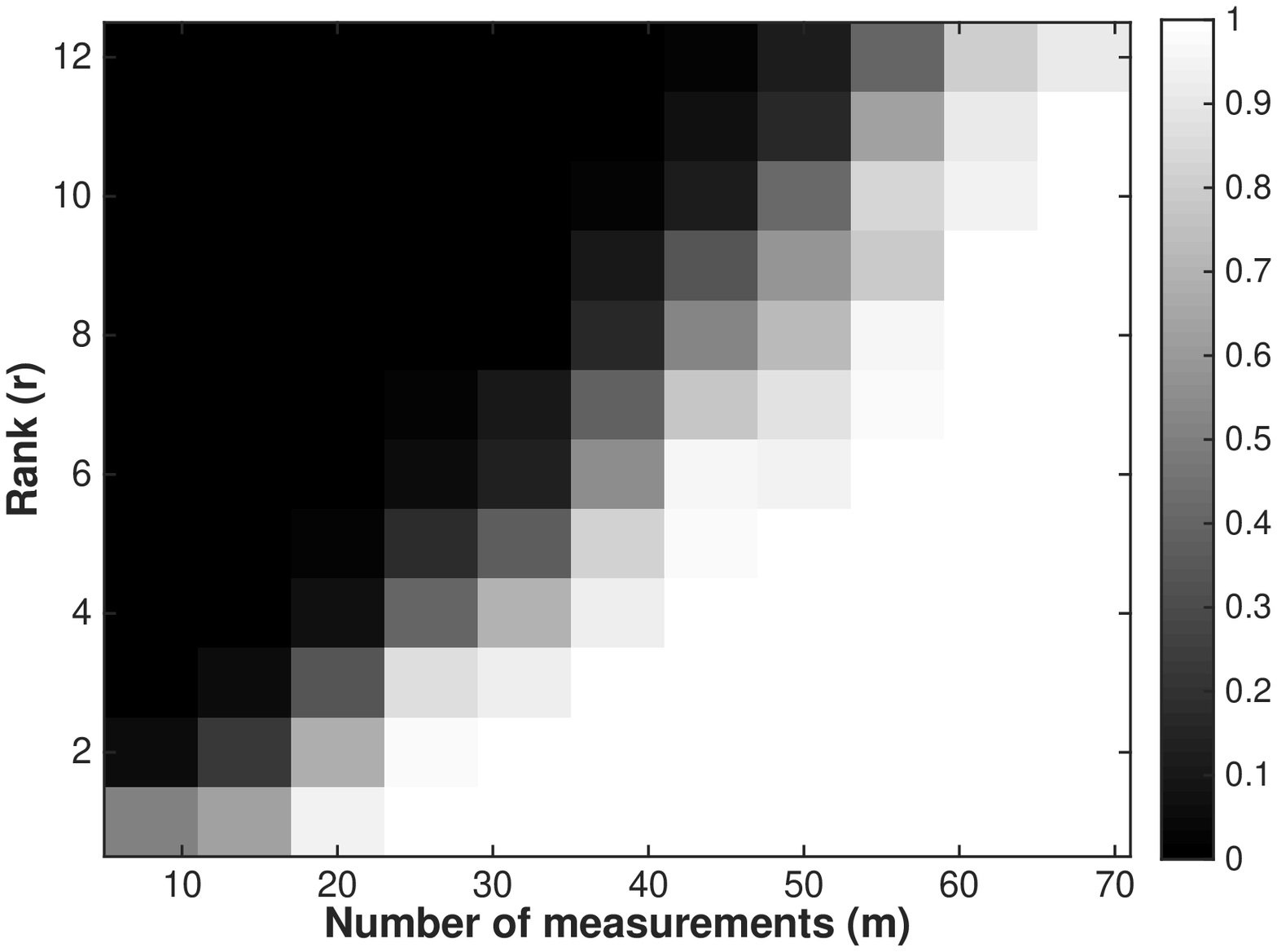} \\
\hspace{-0.1in}(a) & \hspace{-0.2in}(b)
\end{tabular}
\end{center}
\caption{Phase transitions of low-rank Toeplitz PSD matrix recovery with respect to the number of measurements and the rank, (a) without outliers, and (b) with 5\% of measurements corrupted by standard Gaussian variables, when $n=64$.   }\label{fig:rank_vs_outliers_toeplitz}
\end{figure}

\subsection{Performance of Non-Convex Subgradient Descent} \label{sec:performance_nonconvex}

We next examine the performance of the non-convex subgradient descent algorithm in Alg.~\ref{algorithm:nonconvex}, where the number of iterations is set as $T_{\max} = 3\times 10^4$, which is a large value to guarantee convergence when terminated. Denote the solution to Alg.~\ref{algorithm:nonconvex} by $\hat{\bU}$, and each Monte Carlo simulation is deemed successful if the normalized estimate error satisfies $\|\hat{\bX}-\bX_{0}\|_{\mathrm{F}}/\|\bX_{0}\|_{\mathrm{F}}\leq 10^{-6}$, where $\hat{\bX}=\hat{\bU}\hat{\bU}^T$ is the estimated low-rank PSD matrix. For each cell, the success rate is calculated by averaging over $100$ Monte Carlo simulations.



\begin{figure}[h]
\begin{center}
\begin{tabular}{c}
\includegraphics[width=0.35\textwidth]{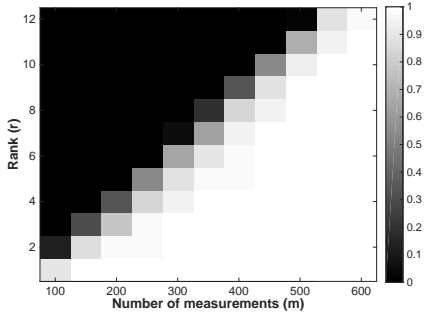} 
\end{tabular}
\end{center}
\caption{Phase transitions of low-rank PSD matrix recovery with respect to the number of measurements and the rank for the proposed Alg.~\ref{algorithm:nonconvex} using noise-free measurements, when $n=40$.}\label{fig_lowrank_psd_rec_fixn_changemr_L1L2_noutnbd}
\end{figure}

Fig.~\ref{fig_lowrank_psd_rec_fixn_changemr_L1L2_noutnbd} shows the success rate of Alg.~\ref{algorithm:nonconvex} with respect to the number of measurements and the rank under the same setup of Fig.~\ref{fig_lowrank_psd_rec_fixn_changemr_psdtrace_noutnbd} for noise-free measurements, when $n=40$. Indeed, empirically Alg.~\ref{algorithm:nonconvex} performs similarly as the convex algorithms but with a much lower computational cost. Moreover, the proposed Alg.~\ref{algorithm:nonconvex} allows perfect recovery even in the presence of outliers. For comparison, we implement the extension of the Wirtinger Flow (WF) algorithm in \cite{candes2015phase,white2015local,zheng2015convergent} in the low-rank case, that minimizes the squared $\ell_2$-norm of the residual, where the update rule per iteration becomes
\begin{equation*}
 \bU^{(t+1)}  =  \bU^{(t)} + \mu_t^{\mathrm{WF}} \frac{1}{m}\sum_{i = 1}^m \left( z_i -  \Vert  (\bU^{(t)} )^T\ba_i \Vert_2^2\right) \ba_i\ba_i^T\bU^{(t)}, 
 \end{equation*}
using the same initialization \eqref{initialization}. The step size is set as $\mu_t^{\mathrm{WF}} = 0.1/\left\Vert \bU_{0} \right\Vert_{\mathrm{F}}^{2}$. Fig. \ref{fig_lowrank_psd_rec_fixmn_changeroutlier_L1L2_nbd} (a) shows the success rates of Alg.~\ref{algorithm:nonconvex} with respect to the percent of outliers and the rank, under the same setup of Fig.~\ref{fig_lowrank_psd_rec_psd} (b), where the performance is even better than the convex counterpart in \eqref{phaselift_outlier}. In contrast, the WF algorithm performs poorly even with very few outliers, as shown in its success rate plot in Fig.~\ref{fig_lowrank_psd_rec_fixmn_changeroutlier_L1L2_nbd} (b), as the loss function used for WF is not robust to outliers.

\begin{figure}[h]
\begin{center}
\begin{tabular}{cc}
\hspace{-0.1in}\includegraphics[width=0.25\textwidth]{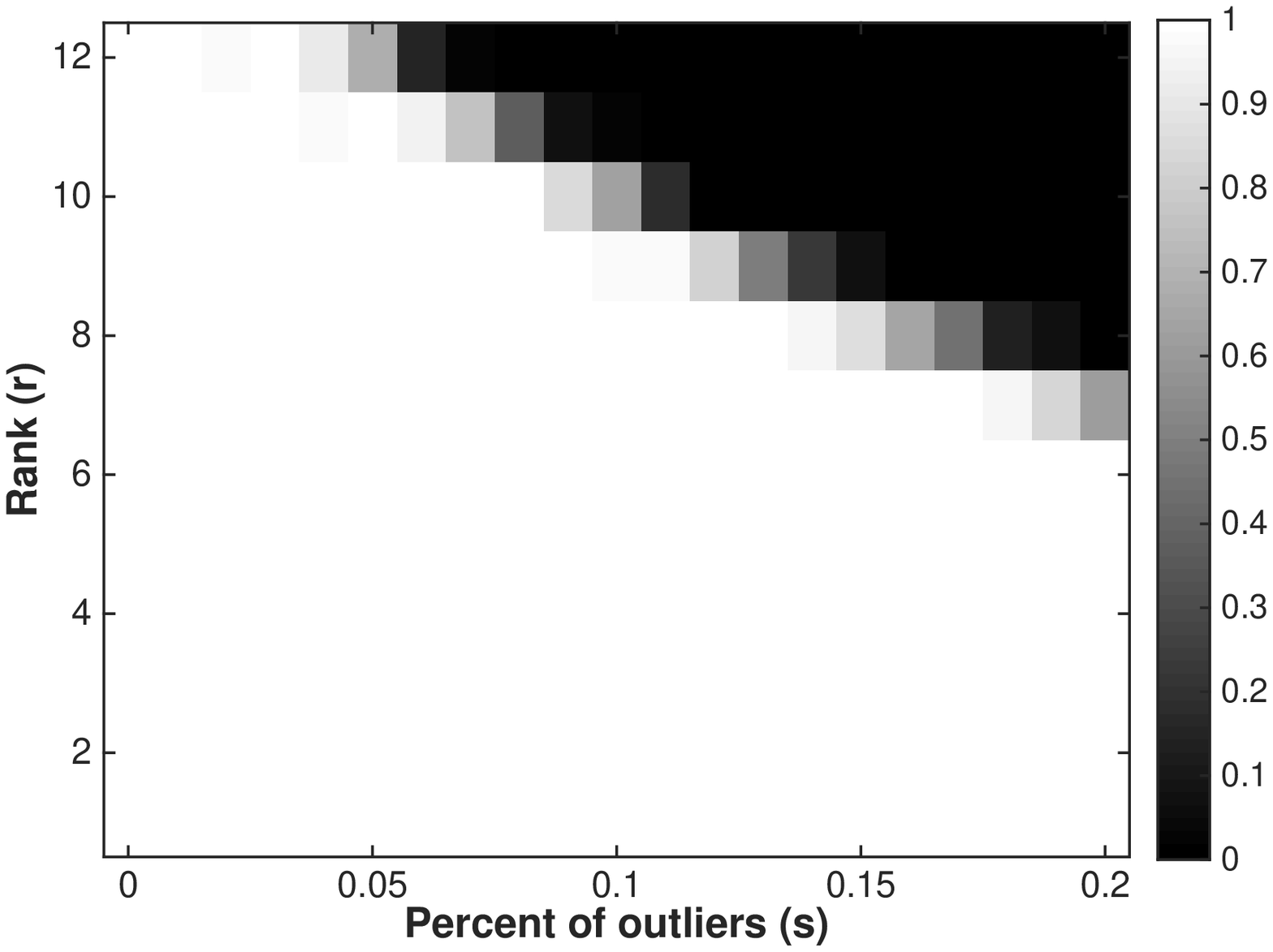} &
\hspace{-0.2in}\includegraphics[width=0.25\textwidth]{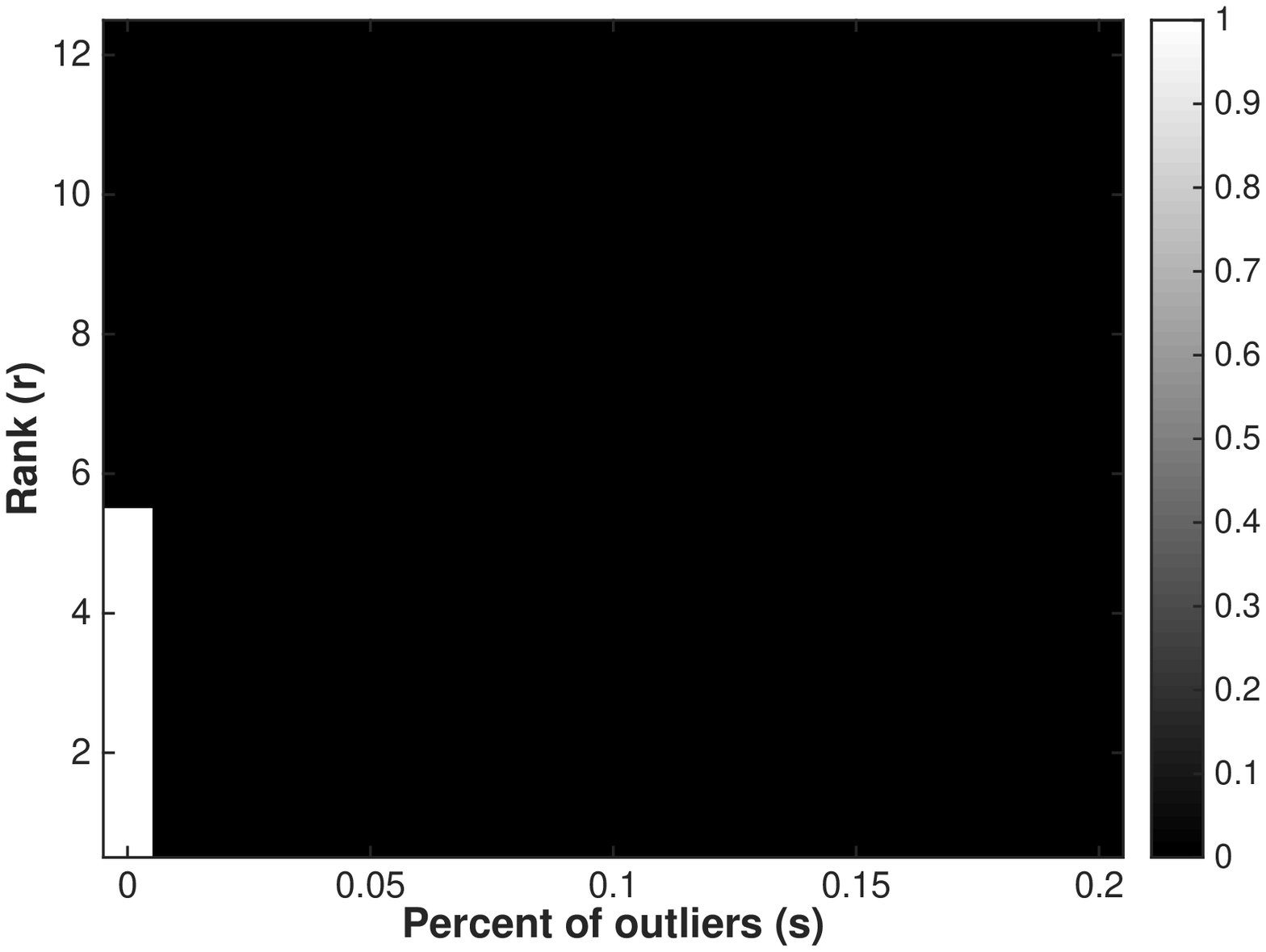} \\
\hspace{-0.1in}(a) & \hspace{-0.2in}(b)
\end{tabular}
\end{center}
\caption{Phase transitions of low-rank PSD matrix recovery with respect to the percent of outliers and the rank using (a) the proposed Alg.~\ref{algorithm:nonconvex}, and (b) the WF algorithm, when $n=40$ and $m=600$. }\label{fig_lowrank_psd_rec_fixmn_changeroutlier_L1L2_nbd}
\end{figure}


\subsection{Comparisons with Additional Bounded Noise}
 
\yxl{Finally, we compare the two proposed algorithms (Robust-PhaseLift in \eqref{phaselift_outlier} and Alg.~\ref{algorithm:nonconvex}), the WF algorithm and the PhaseLift algorithm in \eqref{phaselift_constraint} when the measurements are corrupted by both outliers and bounded noise. Fix $n=40$ and $r=3$. The rank-$r$ PSD matrix $\bX_{0}$, the sensing vectors, as well as the outliers are generated similarly as earlier, where the fraction of the outliers is set to $5\%$. Moreover, each entry in the bounded noise $\bw$ is i.i.d. drawn from $\mbox{Unif}[-4/m,4/m]$, thus $\| \bw \|_{1} \le \epsilon$, where $\epsilon = 4$. Fig.~\ref{fig_lowrank_psd_rec_fixnroutlier_changem_comp_mse} depicts the mean squared error $\| \hat{\bX} - \bX_{0} \|_{\mathrm{F}}^{2}$ for different algorithms with respect to the number of measurements, where $\hat{\bX}$ is the estimated PSD matrix. For the subgradient descent algorithm in Alg.~\ref{algorithm:nonconvex}, various ranks are used as prior information, corresponding to the correct rank $r$, its underestimate $r-1$, and its overestimate $r+1$. It can be seen that Alg.~\ref{algorithm:nonconvex} works well as long as the given rank provides an upper bound of the true rank, and it performs much better than the WF algorithm which is not outlier-robust. On the other hand, the PhaseLift algorithm \eqref{phaselift_constraint} does not admit favorable performance for various constraint parameters ($\epsilon$, $2\epsilon$, $4\epsilon$) as expected since the outliers do not fall into the prescribed noise bound. In fact, it fails to return any feasible solution when the number and amplitudes of outliers is too large in our simulation. In contrast, Robust-PhaseLift allows stable recovery even with an additional bounded noise, which performs comparably with Alg.~\ref{algorithm:nonconvex} with the correct model order.

\begin{figure}[h]
\begin{center}
\includegraphics[width=0.45\textwidth]{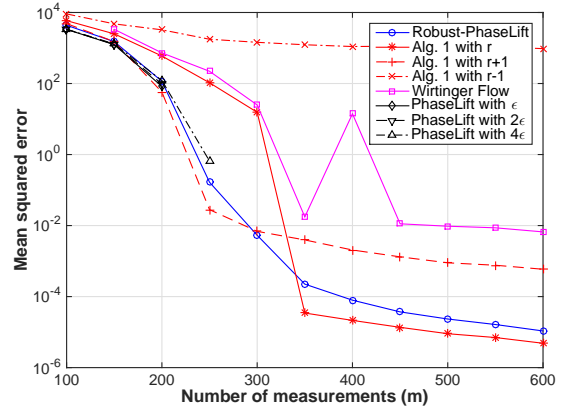}
\end{center}
\caption{Comparisons of mean squared errors using different algorithms with respect to the number of measurements with $5\%$ outliers and bounded noise, when $n=40$ and $r=3$.}\label{fig_lowrank_psd_rec_fixnroutlier_changem_comp_mse}
\end{figure}
}

\section{Proof of Main Theorem}\label{sec:proofs}

In this section we prove Theorem~\ref{main}, and the roadmap of our proof is below. In Section~\ref{appro_dual_cert}, we first provide the sufficient conditions for an approximate dual certificate that certifies the optimality of the proposed algorithm \eqref{phaselift_outlier} 
in Lemma~\ref{lemma-dual-certificate}. Section~\ref{sec:isometryA} records a few lemmas that show $\cA$ satisfies the required restricted isometry properties. Then, a dual certificate is constructed and validated for a fixed low-rank PSD matrix $\bX_0$ in Section~\ref{dual-construction}. Finally, the proof is concluded in Section~\ref{main_theorem_proof}.

First we introduce some additional notations. 
\yxl{ Let $\mathcal{S}$ be a subset of $\left\{1,2,\dots,m\right\}$, then $\mathcal{S}^\perp$ is the complement of $\mathcal{S}$ with respect to $\left\{1,2,\dots,m\right\}$. $\mathcal{A}_{\mathcal{S}}$ is the mapping operator $\mathcal{A}$ constrained on $\mathcal{S}$, which is defined as $\mathcal{A}_{\mathcal{S}}\left(\boldsymbol{X}\right)=\left\{\boldsymbol{a}_{i}^{T}\boldsymbol{X}\boldsymbol{a}_{i}\right\}_{i\in\mathcal{S}}$. Denote the adjoint operator of $\mathcal{A}$ by $\mathcal{A}^*(\boldsymbol{\mu})=\sum_{i=1}^m \mu_i \ba_i\ba_i^T$, where $\mu_{i}$ is the $i$th entry of $\boldsymbol{\mu}$, $1\leq i\leq m$.}
 We use $\left\Vert \boldsymbol{X}\right\Vert $, $\left\Vert \boldsymbol{X}\right\Vert _{\mathrm{F}}$ and $\left\Vert \boldsymbol{X}\right\Vert_1$ to denote the spectral
norm, the Frobenius norm and the nuclear norm of the matrix $\boldsymbol{X}$, respectively, and use $\left\Vert\boldsymbol{x}\right\Vert_{p}$ to denote the $\ell_{p}$-norm of the vector $\boldsymbol{x}$. Let the singular value decomposition of the fixed rank-$r$ PSD matrix $\bX_{0}$ be $\bX_{0}=\boldsymbol{U}\boldsymbol{\Lambda}\boldsymbol{U}^{T}$, then the symmetric tangent space $T$ at $\bX_0$ is denoted by
\begin{equation*}
T:=\left\{ \boldsymbol{U}\bZ^T+\bZ\boldsymbol{U}^{T}\mid\bZ\in\mathbb{R}^{n\times r}\right\}.
\end{equation*}
We denote by $\mathcal{P}_{T}$ and $\mathcal{P}_{T^{\perp}}$ the orthogonal projection onto $T$ and its orthogonal complement, respectively. And for notational simplicity, we denote $\boldsymbol{H}_{T}:=\mathcal{P}_{T}\left(\boldsymbol{H}\right)$ and $\boldsymbol{H}_{T^{\perp}}:=\boldsymbol{H}-\mathcal{P}_{T}\left(\boldsymbol{H}\right)$ for any symmetric matrix $\boldsymbol{H}\in\mathbb{R}^{n\times n}$. Moreover, $\gamma$, $c$, $c_{1}$ and $c_{2}$ represent absolute constants, whose values may change according to context.

\subsection{Approximate Dual Certificate}\label{appro_dual_cert}
The following lemma suggests that under certain appropriate restricted isometry preserving properties of $\cA$, a properly constructed dual certificate can guarantee faithful recovery of the proposed algorithm \eqref{phaselift_outlier}.

\begin{lemma}[Approximate Dual Certificate for \eqref{phaselift_outlier}]\label{lemma-dual-certificate}
Denote a subset $\cS$ with $\frac{|\cS|}{m}: = \lceil  \frac{s_{0}}{13\sqrt{2r}} \rceil $, where $0<s_{0}<1$ is some constant, and the support of $\bbeta$ satisfies $\supp(\bbeta)\subseteq\cS$. Suppose that the mapping $\mathcal{A}$
obeys that for all symmetric matrices
$\boldsymbol{X}$,
\begin{equation}\label{isometryA}
\frac{1}{m}\left\Vert \mathcal{A}\left(\boldsymbol{X}\right)\right\Vert _{1}\leq\left(1+\frac{1}{10}\right)\left\|\boldsymbol{X}\right\|_1,
\end{equation}
and
\begin{equation}\label{isometryA_subset}
\frac{1}{|\mathcal{S}|}\left\Vert \mathcal{A}_{\mathcal{S}}\left(\boldsymbol{X}\right)\right\Vert _{1}\leq\left(1+\frac{1}{10}\right) \left\|\boldsymbol{X}\right\|_1,
\end{equation}
and for all matrices $\boldsymbol{X}\in T$,
\begin{equation}\label{lower-isometry}
\frac{1}{|\mathcal{S}^{\perp}|}\left\Vert \mathcal{A}_{\mathcal{S}^{\perp}}\left(\boldsymbol{X}\right)\right\Vert _{1}>\frac{1}{5}\left(1-\frac{1}{12}\right)\left\Vert \boldsymbol{X}\right\Vert _{\mathrm{F}},
\end{equation}
where $\mathcal{A}_{\mathcal{S}}$ and $\mathcal{A}_{\mathcal{S}^{\perp}}$ is the operator constrained on $\mathcal{S}$ and $\mathcal{S}^{\perp}$ respectively. Then if there exists a matrix $\boldsymbol{Y}=\mathcal{A}^*(\boldsymbol{\mu})$ that satisfies
\begin{equation}\label{certificate}
\boldsymbol{Y}_{T^{\perp}}\preceq - \frac{1}{r}\boldsymbol{I}_{T^{\perp}}, \quad \left\Vert \boldsymbol{Y}_{T}\right\Vert _{\mathrm{F}}\leq\frac{1}{13r} ,
\end{equation}
and
\begin{equation}\label{subgradient_lambda}
\left\{\begin{array}{cc}
\mu_i = \frac{9}{m}\sgn(\beta_i), &\quad  i\in\supp(\bbeta)\\
|\mu_i |\le \frac{9}{m}, & \quad i\notin \supp(\bbeta)
\end{array}\right.,
\end{equation}
the solution to \eqref{phaselift_outlier} satisfies
$$\left\|\hat{\bX}- \boldsymbol{X}_{0} \right\|_{\mathrm{F}}\leq c\frac{r \epsilon}{m},$$
where $c$ is a constant. 
\end{lemma}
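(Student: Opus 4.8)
The plan is to prove Lemma~\ref{lemma-dual-certificate} following the classical ``approximate dual certificate'' argument for $\ell_1$-type convex programs, adapted to the PSD feasibility setting. Let $\hat{\bX}$ be the minimizer of \eqref{phaselift_outlier} and write the error matrix as $\bH = \hat{\bX} - \bX_0$, which is symmetric (both are PSD, hence symmetric). First I would exploit optimality: since $\hat{\bX}$ is feasible ($\hat{\bX}\succeq 0$) and optimal, $\|\bz - \cA(\hat{\bX})\|_1 \le \|\bz - \cA(\bX_0)\|_1 = \|\bbeta + \bw\|_1 \le \|\bbeta\|_1 + \epsilon$. Writing $\bz - \cA(\hat{\bX}) = \bbeta + \bw - \cA(\bH)$ and splitting the $\ell_1$-norm over $\cS$ (which contains $\supp(\bbeta)$) and $\cS^\perp$, the triangle inequality on the $\cS^\perp$ part gives $\|\cA_{\cS^\perp}(\bH)\|_1 \le \|\bw_{\cS^\perp}\|_1 + (\text{terms on }\cS)$; more carefully, one derives a bound of the form $\|\cA_{\cS^\perp}(\bH)\|_1 \le \|\cA_{\cS}(\bH)\|_1 + 2\epsilon$ (the standard ``cone'' inequality, with the outlier signs cancelling on $\cS$ up to $2\|\bbeta\|_1$-free terms because $\bbeta$ is supported in $\cS$).

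Next I would decompose $\bH = \bH_T + \bH_{T^\perp}$ where $\bH_{T^\perp}\succeq 0$: this is the key structural fact, analogous to the phase-retrieval argument, that because $\hat{\bX} = \bX_0 + \bH \succeq 0$ and $\bX_0$ has range exactly $T$'s defining subspace, the component of $\bH$ orthogonal to the tangent space must be PSD, so $\|\bH_{T^\perp}\|_1 = \langle \bI_{T^\perp}, \bH_{T^\perp}\rangle = \mathrm{Tr}(\bH_{T^\perp})$. Then I would bring in the dual certificate $\bY = \cA^*(\bmu)$. The pairing $\langle \bY, \bH\rangle = \langle \bmu, \cA(\bH)\rangle$ is controlled two ways: on one side, using \eqref{subgradient_lambda} and the sign structure of $\bbeta$ together with the optimality inequality, $\langle\bmu,\cA(\bH)\rangle$ is bounded above by something like $-\tfrac{9}{m}\|\cA_{\cS^\perp}(\bH)\|_1 + \tfrac{9}{m}\|\cA_\cS(\bH)\|_1 + O(\epsilon/m)$; on the other side, using \eqref{certificate} and splitting $\langle\bY,\bH\rangle = \langle\bY_T,\bH_T\rangle + \langle\bY_{T^\perp},\bH_{T^\perp}\rangle$, the second term is $\le -\tfrac{1}{r}\mathrm{Tr}(\bH_{T^\perp})$ (since $\bY_{T^\perp}\preceq -\tfrac1r\bI_{T^\perp}$ and $\bH_{T^\perp}\succeq0$), while the first is $\ge -\|\bY_T\|_{\mathrm{F}}\|\bH_T\|_{\mathrm{F}} \ge -\tfrac{1}{13r}\|\bH_T\|_{\mathrm{F}}$. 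Combining these yields an inequality of the shape $\tfrac1r\mathrm{Tr}(\bH_{T^\perp}) \lesssim \tfrac{1}{13r}\|\bH_T\|_{\mathrm{F}} + \tfrac{1}{m}(\|\cA_\cS(\bH)\|_1 - \|\cA_{\cS^\perp}(\bH)\|_1) + O(\epsilon/m)$.

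At this point the three isometry hypotheses \eqref{isometryA}, \eqref{isometryA_subset}, \eqref{lower-isometry} do the heavy lifting. I would bound $\|\cA_\cS(\bH)\|_1 \le \tfrac{11}{10}|\cS|\,\|\bH\|_1 \le \tfrac{11}{10}|\cS|(\|\bH_T\|_1 + \mathrm{Tr}(\bH_{T^\perp}))$ and $\|\bH_T\|_1\le\sqrt{2r}\|\bH_T\|_{\mathrm F}$ (rank of $\bH_T$ is at most $2r$), using $|\cS|/m \approx s_0/(13\sqrt{2r})$ so that $\tfrac{|\cS|}{m}\sqrt{2r}$ is a small constant $\approx s_0/13$; and I would lower-bound $\tfrac1m\|\cA_{\cS^\perp}(\bH)\|_1 = \tfrac{|\cS^\perp|}{m}\cdot\tfrac{1}{|\cS^\perp|}\|\cA_{\cS^\perp}(\bH_T + \bH_{T^\perp})\|_1$, but \eqref{lower-isometry} only applies to matrices in $T$, so I would write $\cA_{\cS^\perp}(\bH) = \cA_{\cS^\perp}(\bH_T) + \cA_{\cS^\perp}(\bH_{T^\perp})$, lower-bound the $\bH_T$ piece via \eqref{lower-isometry} and control the cross term with \eqref{isometryA} applied to $\bH_{T^\perp}$ (noting $\|\bH_{T^\perp}\|_1 = \mathrm{Tr}(\bH_{T^\perp})$). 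Putting all the pieces together and choosing $s_0$ a small enough absolute constant so the coefficients of $\|\bH_T\|_{\mathrm F}$ and $\mathrm{Tr}(\bH_{T^\perp})$ on the right are strictly dominated, I get $\|\bH_T\|_{\mathrm{F}} + \mathrm{Tr}(\bH_{T^\perp}) \lesssim \epsilon/m$ after absorbing constants and rescaling by $r$, and since $\|\bH\|_{\mathrm F} \le \|\bH_T\|_{\mathrm F} + \|\bH_{T^\perp}\|_{\mathrm F} \le \|\bH_T\|_{\mathrm F} + \mathrm{Tr}(\bH_{T^\perp})$, this gives the claimed bound $\|\hat\bX - \bX_0\|_{\mathrm F} \le c\,r\epsilon/m$. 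The main obstacle I anticipate is the careful bookkeeping of constants: the hypotheses are stated with very specific numerical constants ($\tfrac{1}{10}$, $\tfrac1{13}$, $\tfrac15$, $\tfrac1{12}$, the $9/m$ in \eqref{subgradient_lambda}), and one must verify that with $|\cS|/m$ proportional to $s_0/\sqrt{r}$ the competing terms really do close up with room to spare — in particular the interplay between the factor $\sqrt{2r}$ from $\|\bH_T\|_1\le\sqrt{2r}\|\bH_T\|_{\mathrm F}$ and the $1/r$ scaling in the certificate \eqref{certificate} is exactly what forces the $s\le s_0/r$ and $m\gtrsim nr^2$ scalings downstream, so the constant-chasing is not merely cosmetic but is where the rank dependence is born.
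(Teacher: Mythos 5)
Your overall architecture coincides with the paper's: both arguments hinge on producing two competing estimates for $\mathrm{Tr}(\bH_{T^{\perp}})$, where $\bH=\hat{\bX}-\bX_0$ --- a lower bound of the form $c\,\|\bH_T\|_{\mathrm F}-O(\epsilon/m)$ obtained from the cone inequality $\|\cA_{\cS^{\perp}}(\bH)\|_1\le\|\cA_{\cS}(\bH)\|_1+2\|\bw_{\cS^{\perp}}\|_1$ together with \eqref{isometryA}, \eqref{isometryA_subset}, \eqref{lower-isometry} and $\|\bH_T\|_{1}\le\sqrt{2r}\|\bH_T\|_{\mathrm F}$, and an upper bound $\tfrac1{13}\|\bH_T\|_{\mathrm F}+O(r\epsilon/m)$ obtained from the dual certificate --- which squeeze $\|\bH_T\|_{\mathrm F}\lesssim r\epsilon/m$. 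The facts $\bH_{T^{\perp}}\succeq0$ and $\|\bH_{T^{\perp}}\|_{\mathrm F}\le\mathrm{Tr}(\bH_{T^{\perp}})$ then finish the proof exactly as you indicate, and your treatment of the cross term $\cA_{\cS^{\perp}}(\bH_{T^{\perp}})$ via \eqref{isometryA} matches the paper's.

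The one step that does not survive scrutiny as written is your control of the dual pairing. You assert that $\langle\boldsymbol{\mu},\cA(\bH)\rangle$ is bounded \emph{above} by $-\tfrac9m\|\cA_{\cS^{\perp}}(\bH)\|_1+\tfrac9m\|\cA_{\cS}(\bH)\|_1+O(\epsilon/m)$. The negative first term has no justification: for $i\in\cS^{\perp}$ the certificate only guarantees $|\mu_i|\le 9/m$ with no sign relation to $(\cA(\bH))_i$, so termwise one can only get $+\tfrac9m\|\cA_{\cS^{\perp}}(\bH)\|_1$, which is useless. Moreover the direction is backwards: since \eqref{certificate} yields the \emph{upper} bound $\langle\bY,\bH\rangle\le\tfrac1{13r}\|\bH_T\|_{\mathrm F}-\tfrac1r\mathrm{Tr}(\bH_{T^{\perp}})$ (Cauchy--Schwarz on $T$, PSD-ness of $\bH_{T^{\perp}}$ on $T^{\perp}$), what you need is a \emph{lower} bound on $\langle\boldsymbol{\mu},\cA(\bH)\rangle=\langle\bY,\bH\rangle$. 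The paper gets it by observing that $\tfrac m9\boldsymbol{\mu}$ is a subgradient of $\|\cdot\|_1$ at $\bbeta$; applying the subgradient inequality at the point $\bbeta+\bw-\cA(\bH)$ and invoking optimality of $\hat{\bX}$ makes the $\|\bbeta\|_1$ terms cancel exactly (this is precisely where $\mu_i=\tfrac9m\sgn(\beta_i)$ on $\supp(\bbeta)$ is used), leaving $\langle\boldsymbol{\mu},\cA(\bH)\rangle\ge-\tfrac{18\epsilon}{m}$ with no $\cA(\bH)$ terms at all. With that correction the spurious term $\tfrac1m\bigl(\|\cA_{\cS}(\bH)\|_1-\|\cA_{\cS^{\perp}}(\bH)\|_1\bigr)$ disappears from your combined inequality --- which matters, because if you instead try to absorb it via $\tfrac1m\|\cA_{\cS}(\bH)\|_1\lesssim\tfrac{|\cS|}{m}\bigl(\sqrt{2r}\|\bH_T\|_{\mathrm F}+\mathrm{Tr}(\bH_{T^{\perp}})\bigr)$, the coefficient of $\mathrm{Tr}(\bH_{T^{\perp}})$ after clearing the $1/r$ is of order $\tfrac{|\cS|}{m}r\sim s_0\sqrt{r}$, which is not dominated for large $r$. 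The remaining bookkeeping in your sketch matches the paper's.
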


\begin{proof}
Denote the solution to \eqref{phaselift_outlier} by $\hat{\boldsymbol{X}} = \boldsymbol{X}_{0}+\boldsymbol{H}\neq\boldsymbol{X}_{0}$, then we have $\hat{\boldsymbol{X}} \succeq 0$, $\boldsymbol{H}_{T^{\perp}}\succeq0$, and furthermore,
\begin{align*}
 \| \cA(\bH) - (\bbeta+\bw) \|_1 &= \|\bz -\cA(\bX_0 +\bH) \|_1 \\
 &= \|\bz -\cA(\hat{\bX}) \|_1 \\
 &\leq \|\bz -\cA(\bX_0) \|_1  = \| \bbeta+\bw\|_1,
\end{align*}
\yc{where the inequality follows from the optimality of $\hat{\bX}$ since both $\hat{\bX}$ and $\bX_0$ are feasible to \eqref{phaselift_outlier}.} Since
\begin{equation*}
  \| \cA(\bH) - (\bbeta+\bw) \|_1 =\| \cA_{\cS}(\bH) - \bbeta-\bw_{\cS} \|_1 + \|\cA_{\cS^{\perp}}(\bH)-\bw_{\cS^{\perp}} \|_1, 
\end{equation*}
and
\begin{equation*}
 \| \bbeta+\bw\|_1 = \|\bbeta+ \bw_{\cS}\|_1 + \|\bw_{\cS^{\perp}}\|_1, 
\end{equation*}
we have
\begin{align*}
\| \cA_{\cS^{\perp}}(\bH) \|_1 & \leq \| \cA_{\cS^{\perp}}(\bH)-\bw_{\cS^{\perp}} \|_1 +  \|\bw_{\cS^{\perp}}\|_1\\
& \leq  \| \bbeta+\bw\|_1  - \| \cA_{\cS}(\bH) - \bbeta-\bw_{\cS} \|_1+  \|\bw_{\cS^{\perp}}\|_1\\
& \leq  \|\bbeta+ \bw_{\cS}\|_1 -\| \cA_{\cS}(\bH) - \bbeta-\bw_{\cS} \|_1 +2 \|\bw_{\cS^{\perp}}\|_1 \\
& \leq \| \cA_{\cS}(\bH) \|_1+2 \|\bw_{\cS^{\perp}}\|_1,
\end{align*}
where the last inequality follows from the triangle inequality. We could further bound
\begin{align}
 \| \cA_{\cS^{\perp}}(\bH_{T}) \|_1  &\leq  \| \cA_{\cS^{\perp}}(\bH) \|_1 + \| \cA_{\cS^{\perp}}(\bH_{T^{\perp}}) \|_1 \nonumber \\
 & \leq \| \cA_{\cS}(\bH) \|_1+ \| \cA_{\cS^{\perp}}(\bH_{T^{\perp}}) \|_1 + 2 \|\bw_{\cS^{\perp}}\|_1 \nonumber \\
 &\leq\| \cA_{\cS}(\bH_T) \|_1+\| \cA_{\cS}(\bH_{T^{\perp}}) \|_1 \nonumber \\
& \quad + \| \cA_{\cS^{\perp}}(\bH_{T^{\perp}}) \|_1 + 2 \|\bw_{\cS^{\perp}}\|_1 \nonumber \\
& = \| \cA_{\cS}(\bH_T) \|_1+\| \cA(\bH_{T^{\perp}}) \|_1 + 2 \|\bw_{\cS^{\perp}}\|_1. \label{Hbound}
\end{align}

Our assumptions on $\mathcal{A}$ imply that
\begin{align*}
& \left(1+\frac{1}{10}\right)\mathrm{Tr}\left(\boldsymbol{H}_{T^{\perp}}\right) \\
& \geq\frac{1}{m}\left\Vert \mathcal{A}\left(\boldsymbol{H}_{T^{\perp}}\right)\right\Vert _{1} \\
& \geq \frac{1}{m}\left(  \| \cA_{\cS^{\perp}}(\bH_{T}) \|_1 -  \| \cA_{\cS}(\bH_T) \|_1- 2 \|\bw_{\cS^{\perp}}\|_1 \right) \\
& \geq \frac{|\mathcal{S}^{\perp}|}{5m}\left(1-\frac{1}{12}\right) \|\bH_T\|_{\mathrm{F}} -\frac{|\mathcal{S}|}{m}\left(1+\frac{1}{10}\right)\|\bH_T\|_{1} -\frac{2\epsilon}{m},
\end{align*}
where the first inequality follows from \eqref{isometryA} \yxl{due to $\left\Vert \boldsymbol{H}_{T^{\perp}} \right\Vert_{1} = \mathrm{Tr}\left( \boldsymbol{H}_{T^{\perp}} \right)$, as $\boldsymbol{H}_{T^{\perp}}\succeq0$}, the second inequality follows from \eqref{Hbound}, and the last inequality follows from \eqref{isometryA_subset} and \eqref{lower-isometry}. This gives
\begin{equation}\label{lower_traceHTperp}
\mathrm{Tr}\left(\boldsymbol{H}_{T^{\perp}}\right) \geq \left( \frac{|\mathcal{S}^{\perp}|}{6m}  - \frac{|\mathcal{S}|}{m} \sqrt{2r} \right)  \|\bH_T\|_{\mathrm{F}} - \frac{2\epsilon}{m},
\end{equation}
where we use the inequality $\|\bH_T\|_{\mathrm{1}}\le\sqrt{2r}\|\bH_T\|_{\mathrm{F}}$.

On the other hand, since $\boldsymbol{\mu}/(9/m)$ is a subgradient of the $\ell_1$-norm at $\bbeta$ from \eqref{subgradient_lambda}, we have
\begin{align*} 
\| \bbeta\|_1 + \left\langle \frac{m}{9}\boldsymbol{\mu} ,  \bw- \cA(\bH) \right\rangle & \leq \| \bw+ \bbeta - \cA(\bH)\|_1 \\
&\leq  \| \bbeta+\bw\|_1\leq  \| \bbeta\|_1+\| \bw\|_1,
\end{align*}
which, by a simple transformation, is 
\begin{align*} 
\left\langle  \boldsymbol{\mu} , \cA(\bH)\right\rangle& \geq \left\langle \boldsymbol{\mu}, \bw \right\rangle - \frac{9}{m}\| \bw\|_1 \\
&\geq  -\left(\|\boldsymbol{\mu}\|_{\infty}+\frac{9}{m} \right) \| \bw\|_1\geq -\frac{18\epsilon}{m}.
\end{align*}
Then with 
$$\langle \boldsymbol{H}, \boldsymbol{Y}\rangle =\langle \mathcal{A}(\boldsymbol{H}), \boldsymbol{\mu} \rangle,$$
we can get
\begin{align*}
 -\frac{18\epsilon}{m}& \leq \langle \mathcal{A}(\boldsymbol{H}), \boldsymbol{\mu} \rangle  = \langle \bH, \bY \rangle \\
 & = \langle \boldsymbol{H}_{T},\boldsymbol{Y}_T \rangle +\langle \boldsymbol{H}_{T^\perp}, \boldsymbol{Y}_{T^\perp} \rangle \\
& \leq \| \bY_T\|_{\mathrm{F}}  \left\Vert \boldsymbol{H}_{T}\right\Vert _{\mathrm{F}}  - \frac{1}{r} \langle \bH_{T^\perp}, \boldsymbol{I}_{T^\perp} \rangle  \\
& \leq  \frac{1}{13 r} \| \bH_T\|_{\mathrm{F}} - \frac{1}{r}  \mbox{Tr}(\boldsymbol{H}_{T^\perp}),
\end{align*}
which gives
\begin{equation} \label{upper_traceHTperp}
\mbox{Tr}(\boldsymbol{H}_{T^\perp}) \leq \frac{1}{13}\left\Vert \boldsymbol{H}_{T}\right\Vert _{\mathrm{F}} + \frac{18r\epsilon}{m} .
\end{equation}
Combining with \eqref{lower_traceHTperp}, we know
$$\left( \frac{|\mathcal{S}^{\perp}|}{6m}  - \frac{|\mathcal{S}|}{m} \sqrt{2r} \right)  \|\bH_T\|_{\mathrm{F}} - \frac{2\epsilon}{m} \leq  \frac{1}{13}\left\Vert \boldsymbol{H}_{T}\right\Vert _{\mathrm{F}} + \frac{18r\epsilon}{m} . $$

Since $ \frac{|\mathcal{S}^{\perp}|}{6m}  - \frac{|\mathcal{S}|}{m} \sqrt{2r} -\frac{1}{13}  >0$ under the assumption on $\frac{\left\vert\mathcal{S}\right\vert}{m}$ in Lemma~\ref{lemma-dual-certificate}, we have
\begin{equation*}
 \|\bH_T\|_{\mathrm{F}} \leq \frac{20r\epsilon}{m\left( \frac{|\mathcal{S}^{\perp}|}{6m}  - \frac{|\mathcal{S}|}{m} \sqrt{2r} -\frac{1}{13}\right) } \leq c_{1}\frac{ r \epsilon}{m},
\end{equation*}
where $c_1$ is some fixed constant. Finally, we have 
\begin{align*}
 \|\hat{\boldsymbol{X}} - \boldsymbol{X}_{0} \|_{\mathrm{F}}   &\leq \|\boldsymbol{H}_T \|_{\mathrm{F}}+ \| \boldsymbol{H}_{T^\perp}\|_{\mathrm{F}} \\
& \leq \|\boldsymbol{H}_T \|_{\mathrm{F}}+ \mbox{Tr}(\bH_{T^{\perp}}) \\
& \leq \left(1+\frac{1}{13}\right) \left\Vert \boldsymbol{H}_{T}\right\Vert _{\mathrm{F}}+\frac{18r\epsilon}{m} \leq c \frac{r \epsilon}{m},
\end{align*}
for some constant $c$.
\end{proof}

\subsection{Restricted Isometry of $\cA$} \label{sec:isometryA}

The first two conditions \eqref{isometryA} and \eqref{isometryA_subset} in Lemma~\ref{lemma-dual-certificate} are supplied straightforwardly in the following lemma as long as $m\geq c nr$ and $\left\vert \mathcal{S} \right\vert = c_{1}m/r \ge c_{2}n$ for some constants $c$, $c_{1}$ and $c_{2}$.
\begin{lemma}[\cite{candes2013phaselift}]\label{lemma-RIP1}
Fix any $\delta\in(0,\frac{1}{2})$
and assume $m\geq20\delta^{-2}n$. Then for all PSD matrices $\boldsymbol{X}$, one has
\[
\left(1-\delta\right) \left\|\boldsymbol{X}\right\|_1 \leq\frac{1}{m}\left\Vert \mathcal{A}\left(\boldsymbol{X}\right)\right\Vert _{1}\leq\left(1+\delta\right) \left\|\boldsymbol{X}\right\|_1
\]
with probability exceeding $1-2e^{-m\epsilon^{2}/2}$, where $\epsilon^{2}+\epsilon=\frac{\delta}{4}$. The right hand side holds for all symmetric matrices.
\end{lemma}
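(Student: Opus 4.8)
The plan is to reduce this $\ell_1/\ell_1$ isometry to a single spectral-concentration event for the sample second-moment matrix $\widehat{\bSigma}:=\frac1m\sum_{i=1}^m\ba_i\ba_i^T$, namely $E_\delta:=\{\|\widehat{\bSigma}-\bI\|\le\delta\}$. The key simplification is that when $\bX\succeq0$ every summand $\ba_i^T\bX\ba_i$ is nonnegative, so the absolute values in the $\ell_1$-norm disappear and $\frac1m\|\cA(\bX)\|_1=\frac1m\sum_i\ba_i^T\bX\ba_i=\langle\bX,\widehat{\bSigma}\rangle$ is just a linear functional of $\bX$. Since $\|\bX\|_1=\mathrm{Tr}(\bX)=\langle\bX,\bI\rangle$ for PSD $\bX$, the trace/operator-norm duality $|\langle\bX,\bM\rangle|\le\|\bX\|_1\,\|\bM\|$ gives
\[
\Bigl|\tfrac1m\|\cA(\bX)\|_1-\|\bX\|_1\Bigr|=\bigl|\langle\bX,\widehat{\bSigma}-\bI\rangle\bigr|\le\|\bX\|_1\,\|\widehat{\bSigma}-\bI\|,
\]
so on $E_\delta$ the full two-sided bound holds for \emph{all} PSD $\bX$ at once. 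To upgrade the upper bound to an arbitrary symmetric $\bX$, split $\bX=\bX_+-\bX_-$ into its positive and negative parts, so $\bX_\pm\succeq0$ and $\|\bX\|_1=\|\bX_+\|_1+\|\bX_-\|_1$; then $|\ba_i^T\bX\ba_i|\le\ba_i^T\bX_+\ba_i+\ba_i^T\bX_-\ba_i$, and summing and applying the PSD upper bound to $\bX_+$ and $\bX_-$ separately gives $\frac1m\|\cA(\bX)\|_1\le(1+\delta)\|\bX\|_1$ on $E_\delta$.

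It remains to lower bound $\mathbb{P}(E_\delta)$. Since $\|\widehat{\bSigma}-\bI\|=\sup_{\|\bu\|_2=1}|\bu^T(\widehat{\bSigma}-\bI)\bu|$, I would take a $\tfrac14$-net $\mathcal N$ of the unit sphere in $\mathbb{R}^n$ with $|\mathcal N|\le 9^n$ and use the standard comparison $\|\widehat{\bSigma}-\bI\|\le 2\max_{\bu\in\mathcal N}|\bu^T(\widehat{\bSigma}-\bI)\bu|$. For each fixed unit $\bu$, $\sum_i(\ba_i^T\bu)^2$ is a $\chi^2_m$ variable, and the sharp (Laurent--Massart) $\chi^2$ tail bounds yield $\mathbb{P}\bigl(|\tfrac1m\sum_i(\ba_i^T\bu)^2-1|>\delta/2\bigr)\le 2e^{-m\epsilon^2}$ precisely when the upper-deviation budget $2\epsilon+2\epsilon^2$ is set equal to $\delta/2$, i.e.\ $\epsilon^2+\epsilon=\delta/4$ as in the statement. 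A union bound over $\mathcal N$ then gives $\mathbb{P}(E_\delta^c)\le 2\cdot 9^n e^{-m\epsilon^2}$, and the hypothesis $m\ge 20\delta^{-2}n$ (which forces $m\epsilon^2$ to dominate $n\log 9$) absorbs the combinatorial factor, leaving $\mathbb{P}(E_\delta^c)\le 2e^{-m\epsilon^2/2}$.

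There is no genuine conceptual obstacle: this is a textbook $\ell_1/\ell_1$ restricted isometry for rank-one Gaussian measurements, and the PSD/nonnegativity structure is exactly what lets the $\ell_1$-norm collapse to the linear functional $\langle\bX,\widehat{\bSigma}\rangle$. The only real work is bookkeeping the absolute constants so that the precise quoted form ($m\ge20\delta^{-2}n$, failure probability $2e^{-m\epsilon^2/2}$, $\epsilon^2+\epsilon=\delta/4$) comes out exactly: this requires committing to a specific net resolution, using the two-sided Laurent--Massart bound rather than a generic Bernstein estimate, and verifying that the $9^n$ factor (or, alternatively, the $\mathrm{poly}(n)$ factor if one instead applies a matrix concentration inequality to $\widehat{\bSigma}$ directly) is swallowed by the hypothesis on $m$. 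Since the lemma is quoted verbatim from \cite{candes2013phaselift}, I would defer to their constant-tracking for the final numbers.
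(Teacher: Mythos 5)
The paper itself offers no proof of this lemma --- it is imported verbatim from \cite{candes2013phaselift} --- so the relevant comparison is with the argument in that source. Your deterministic reduction is exactly the one used there and is complete: for PSD $\bX$ the nonnegativity of each $\ba_i^T\bX\ba_i$ collapses $\frac1m\|\cA(\bX)\|_1$ to the linear functional $\langle\bX,\widehat{\bSigma}\rangle$ with $\widehat{\bSigma}=\frac1m\sum_i\ba_i\ba_i^T$, the duality bound $|\langle\bX,\widehat{\bSigma}-\bI\rangle|\le\|\bX\|_1\|\widehat{\bSigma}-\bI\|$ turns the single event $\|\widehat{\bSigma}-\bI\|\le\delta$ into the uniform two-sided inequality, and the positive/negative-part split correctly extends the upper bound to all symmetric matrices.

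The gap is in the concentration step, and it is quantitative but real: your proposed route cannot deliver the quoted constants. Since $\epsilon^2+\epsilon=\delta/4$ forces $\epsilon\le\delta/4$, the hypothesis $m\ge 20\delta^{-2}n$ only guarantees $m\epsilon^2\le \tfrac{20}{16}n=1.25\,n$ at the boundary, whereas absorbing the $9^n$ cardinality of a quarter-net while keeping a residual failure probability of $2e^{-m\epsilon^2/2}$ requires $m\epsilon^2\ge 2n\log 9\approx 4.39\,n$; the net-plus-Laurent--Massart argument would therefore need roughly $m\gtrsim 90\,\delta^{-2}n$. The cited proof avoids the net (and its dimensional union-bound factor) entirely by invoking the sharp Davidson--Szarek bound on the extreme singular values of an $m\times n$ standard Gaussian matrix (see \cite[Corollary 5.35]{Vershynin2012}): with probability at least $1-2e^{-t^2/2}$ all singular values lie in $[\sqrt{m}-\sqrt{n}-t,\ \sqrt{m}+\sqrt{n}+t]$, and taking $t=\epsilon\sqrt{m}$ places the spectrum of $\widehat{\bSigma}$ in $[(1-\sqrt{n/m}-\epsilon)^2,(1+\sqrt{n/m}+\epsilon)^2]$ with failure probability exactly $2e^{-m\epsilon^2/2}$. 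The relation $\epsilon^2+\epsilon=\delta/4$ together with $\sqrt{n/m}\le\delta/\sqrt{20}$ is precisely what makes $(1+\sqrt{n/m}+\epsilon)^2\le 1+\delta$ (the inequality is nearly tight at $\delta=\tfrac12$). Substituting that single ingredient for your net argument, the rest of your write-up goes through unchanged.
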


The third condition \eqref{lower-isometry} in Lemma~\ref{lemma-dual-certificate} can be obtained using the mixed-norm RIP-$\ell_{2}/\ell_{1}$ provided in \cite{chen2015exact} as long as $m\geq c nr$ and $\left\vert\mathcal{S}\right\vert\le c_{1}m$ for some constants $c$ and $c_{1}$.

\begin{lemma}[\cite{chen2015exact}]\label{lemma-isometry-manifold}
Suppose the sensing vectors $\boldsymbol{a}_{i}$'s are composed of i.i.d. sub-Gaussian entries, then there exist positive universal constants $c_{1}$, $c_{2}$, $c_{3}$ such that, provided that $m> c_{3}nr$, for all matrices $\boldsymbol{X}$ of rank at most $r$, one has
\begin{equation*}
 \left(1-\delta^{\mathrm{lb}}_{r}\right) \left\Vert\boldsymbol{X}\right\Vert_{\mathrm{F}} \le \frac{2}{m}\left\Vert\mathcal{B}\left(\boldsymbol{X}\right)\right\Vert_{1} \le \left(1+\delta^{\mathrm{ub}}_{r}\right) \left\Vert\boldsymbol{X}\right\Vert_{\mathrm{F}},
\end{equation*}
with probability exceeding $1-c_{1}e^{-c_{2}m}$, where $\delta^{\mathrm{lb}}_{r}$ and $\delta^{\mathrm{ub}}_{r}$ are defined as the RIP-$\ell_{2}/\ell_{1}$ constants. And the operator $\mathcal{B}$ represents the linear transformation that maps $\boldsymbol{X}\in\mathbb{R}^{n\times n}$ to $\left\{\mathcal{B}_{i}\left(\boldsymbol{X}\right)\right\}^{m/2}_{i=1}\in\mathbb{R}^{m/2}$, where $\mathcal{B}_{i}\left(\boldsymbol{X}\right):=\langle \boldsymbol{a}_{2i-1}\boldsymbol{a}^{T}_{2i-1} - \boldsymbol{a}_{2i}\boldsymbol{a}^{T}_{2i},\boldsymbol{X}\rangle$.
\end{lemma}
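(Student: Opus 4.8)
The plan is to establish the mixed-norm RIP-$\ell_2/\ell_1$ for $\mathcal{B}$ by the standard three-step recipe: (i) control the expectation of $|\mathcal{B}_i(\bX)|$ for a fixed matrix; (ii) prove concentration of the empirical average around that expectation via a Bernstein-type bound; and (iii) upgrade the pointwise statement to a uniform one over all matrices of rank at most $r$ by a covering argument. Throughout we normalize to $\left\Vert\bX\right\Vert_{\mathrm{F}}=1$.

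First I would analyze a single term. Writing the eigendecomposition $\bX=\sum_{j}\lambda_j \bv_j\bv_j^T$ with orthonormal $\bv_j$ and $\sum_j\lambda_j^2=1$, the i.i.d.\ standard Gaussianity of $\ba_i$ gives $\mathcal{B}_i(\bX)=\sum_j \lambda_j(g_j^2-h_j^2)$ with $\{g_j\},\{h_j\}$ independent standard Gaussians. This is a symmetric, mean-zero order-two Gaussian chaos with $\mathbb{E}\,\mathcal{B}_i(\bX)^2 = 4\sum_j\lambda_j^2 = 4$. By hypercontractivity of low-degree Gaussian chaos, the $L_1$ and $L_2$ norms of $\mathcal{B}_i(\bX)$ are comparable, so $c_1\le \mathbb{E}|\mathcal{B}_i(\bX)|\le c_2$ for absolute constants; the differencing structure $\ba_{2i-1}\ba_{2i-1}^T-\ba_{2i}\ba_{2i}^T$ is precisely what makes this two-sided bound hold for every symmetric (not merely PSD) $\bX$, since for PSD $\bX$ the undifferenced quantity $\mathbb{E}\,\ba_i^T\bX\ba_i=\mathrm{Tr}(\bX)$ would scale with the nuclear norm instead, which is the reason Lemma~\ref{lemma-RIP1} is an $\ell_1$-to-nuclear-norm statement while $\mathcal{B}$ yields an $\ell_1$-to-Frobenius statement. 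The same chaos estimate shows $|\mathcal{B}_i(\bX)|$ is sub-exponential with $\psi_1$-norm $\lesssim 1$, so Bernstein's inequality shows that for a fixed $\bX$, $\frac{2}{m}\left\Vert\mathcal{B}(\bX)\right\Vert_1$ deviates from $\mathbb{E}|\mathcal{B}_1(\bX)|$ by more than $t$ with probability at most $2\exp(-cm\min\{t^2,t\})$.

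Next I would make the bound uniform. Fix a small $\eta>0$ and let $\mathcal{N}$ be an $\eta$-net, in Frobenius norm, of the set of symmetric matrices of rank at most $r$ and unit Frobenius norm; a volumetric estimate gives $|\mathcal{N}|\le (C/\eta)^{c nr}$. A union bound over $\mathcal{N}$ with a fixed small constant $t$ controls $\frac{2}{m}\left\Vert\mathcal{B}(\bX)\right\Vert_1$ simultaneously at all net points with probability at least $1-(C/\eta)^{cnr}e^{-c'm}\ge 1-c_1e^{-c_2m}$, provided $m\ge c_3 nr$ with $c_3$ large enough to beat the metric entropy. To pass from the net to an arbitrary rank-$\le r$ matrix $\bX$, choose $\bX'\in\mathcal{N}$ with $\left\Vert\bX-\bX'\right\Vert_{\mathrm{F}}\le\eta$; then $\bX-\bX'$ has rank at most $2r$, and $\frac{2}{m}\left\Vert\mathcal{B}(\bX-\bX')\right\Vert_1$ is bounded by a uniform upper RIP constant for rank-$\le 2r$ matrices (obtained by first running the net argument for the upper direction alone, which is the easy one-sided case). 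Taking $\eta$ sufficiently small relative to the constants absorbs this perturbation and yields the two-sided inequality with RIP constants $\delta_r^{\mathrm{lb}},\delta_r^{\mathrm{ub}}$ that can be pushed below any prescribed threshold by enlarging $c_3$.

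The main obstacle is the lower bound. The upper direction is a routine supremum-of-a-sub-exponential-process estimate, but the lower direction needs two ingredients that are less automatic: the anti-concentration estimate $\mathbb{E}|\mathcal{B}_i(\bX)|\gtrsim\left\Vert\bX\right\Vert_{\mathrm{F}}$ uniformly in $\bX$ — exactly where hypercontractivity of degree-two Gaussian chaos, or an explicit fourth-moment Paley--Zygmund computation, enters — and a covering argument that controls the \emph{deviation} $\sup_{\bX}\bigl|\frac{2}{m}\left\Vert\mathcal{B}(\bX)\right\Vert_1-\mathbb{E}|\mathcal{B}_1(\bX)|\bigr|$ rather than the infimum directly, the latter being handled by the net-plus-upper-RIP trick above. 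Demanding the probability be of the form $1-c_1e^{-c_2m}$ rather than a weaker polynomial bound is what forces $m>c_3 nr$; this matches the dimension count of the rank-$r$ manifold and is the source of the $nr$ scaling that, after the split $|\mathcal{S}|=\Theta(m/r)$, propagates to the $nr^2$ requirement in Theorem~\ref{main}.
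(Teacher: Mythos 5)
The paper does not prove this lemma; it is imported verbatim from \cite{chen2015exact}, so there is no in-paper argument to compare against. Your outline --- moment comparison for the degree-two chaos $\sum_j\lambda_j(g_j^2-h_j^2)$ giving $\mathbb{E}|\mathcal{B}_i(\boldsymbol{X})|\asymp\|\boldsymbol{X}\|_{\mathrm{F}}$, Bernstein concentration for the sub-exponential summands, and a covering argument over the rank-$r$ unit Frobenius sphere with the net-residual absorbed by an upper RIP for rank-$2r$ matrices --- is exactly the route taken in that reference, and your observation that the differencing kills the $\mathrm{Tr}(\boldsymbol{X})$ mean (turning a nuclear-norm statement into a Frobenius-norm one) is the right structural point. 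One caveat: your lower bound leans on Gaussian rotation invariance, whereas the lemma is stated for general sub-Gaussian entries, where $\mathbb{E}\,\mathcal{B}_i(\boldsymbol{X})^2$ can degenerate on diagonal matrices unless the fourth moment of the entries exceeds the squared second moment (e.g.\ Rademacher entries with $\boldsymbol{X}=\boldsymbol{I}$ give $\mathcal{B}_i(\boldsymbol{X})\equiv 0$); the cited reference imposes such a non-degeneracy condition, and in the Gaussian setting actually used by Theorem~\ref{main} your argument is complete.
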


The third condition \eqref{lower-isometry} can be easily validated from the lower bound by setting $\delta^{\mathrm{lb}}_{r}$ appropriately, since $\frac{2}{m}\left\Vert\mathcal{B}\left(\boldsymbol{X}\right)\right\Vert_{1}  \le \frac{2}{m}\sum_{i=1}^{m/2} \left(\left\vert\langle \boldsymbol{a}_{2i-1}\boldsymbol{a}^{T}_{2i-1},\boldsymbol{X}\rangle\right\vert + \left\vert\langle \boldsymbol{a}_{2i}\boldsymbol{a}^{T}_{2i},\boldsymbol{X}\rangle\right\vert \right) = 2\| \mathcal{A}\left(\boldsymbol{X}\right)\|_1$.



\subsection{Construction of Dual Certificate} \label{dual-construction}

For notational simplicity, let $\alpha_{0}:=\mathbb{E}  Z^{2}{\bf 1}_{\left\{  \left\vert Z\right\vert \leq 3 \right\} } \approx 0.9707$, $\beta_{0}:=\mathbb{E}  Z^{4}{\bf 1}_{\left\{  \left\vert Z\right\vert \leq 3 \right\} } \approx 2.6728$ and $\theta_{0}:=\mathbb{E}  Z^{6}{\bf 1}_{\left\{  \left\vert Z\right\vert \leq 3 \right\} } \approx 11.2102$ for a standard Gaussian random variable $Z$, where ${\bf 1}_{\mathcal{E}}$ is an indicator function with respect to an event $\mathcal{E}$.

Consider that the singular value decomposition of  a PSD matrix $\boldsymbol{X}_0$ of rank at most $r$ can be represented as $\boldsymbol{X}_0=\sum_{i=1}^{r}\lambda_{i}\boldsymbol{u}_{i}\boldsymbol{u}_{i}^T$, then inspired by \cite{candes2012solving,hand2016phaselift}, we construct $\bY$ as
\begin{align*}
\boldsymbol{Y}
&:=\frac{1}{m}\sum_{j\in\mathcal{S}^{\perp}} \Big[ \frac{1}{r} \sum_{i=1}^r \left|\boldsymbol{a}_{j}^{T}\boldsymbol{u}_{i}\right|^{2} {\bf 1}_{\left\{  \left|\boldsymbol{a}_{j}^{T}\boldsymbol{u}_{i}\right| \leq 3 \right\} }\\
&\quad - (\alpha_{0}+\frac{\beta_{0}-\alpha_{0}}{r})  \Big] \cdot\boldsymbol{a}_{j}\boldsymbol{a}_{j}^{T}  + \frac{9}{m}\sum_{j\in\mathcal{S}} \chi_{j} \boldsymbol{a}_{j}\boldsymbol{a}_{j}^{T}  \\
& := \bY^{(0)}-\bY^{(1)}  + \bY^{(2)},
\end{align*}
where
\begin{align*}
\bY^{(0)} & =\frac{1}{m} \sum_{j\in\mathcal{S}^{\perp}} \left[\frac{1}{r} \sum_{i=1}^r \left|\boldsymbol{a}_{j}^{T}\boldsymbol{u}_{i}\right|^{2} {\bf 1}_{\left\{  \left|\boldsymbol{a}_{j}^{T}\boldsymbol{u}_{i}\right| \leq 3\right\} } \right]\boldsymbol{a}_{j}\boldsymbol{a}_{j}^{T}, \\
\bY^{(1)} & =  \frac{1}{m}\left(\alpha_{0}+\frac{\beta_{0}-\alpha_{0}}{r}\right)\sum_{j\in\mathcal{S}^{\perp}}   \boldsymbol{a}_{j}\boldsymbol{a}_{j}^{T}, \\
\bY^{(2)} &= \frac{9}{m}\sum_{j\in\mathcal{S}} \chi_{j}\boldsymbol{a}_{j}\boldsymbol{a}_{j}^{T}.
\end{align*}
We set $\chi_j =  \mathrm{sgn}\left(\beta_{j}\right) $ if $j\in\mbox{supp}(\bbeta)$, otherwise $\chi_j$'s are i.i.d. Rademacher random variables with $\mathbb{P}\left\{\chi_j=1\right\}=\mathbb{P}\left\{\chi_j=-1\right\}=1/2$.

The construction immediately indicates that $\bY$ satisfies \eqref{subgradient_lambda}. We will show that $\bY$ satisfies \eqref{certificate} with high proability. In what follows, we separate the constructed $\bY$ into two parts and consider the bounds on $\bY^{(0)}-\bY^{(1)}$ and $\bY^{(2)}$ respectively.

\subsubsection{Proof of $\bY_{T^{\perp}}+\frac{1}{r}\bI_{T^\perp} \preceq 0$} 
First, by standard results in random matrix theory \cite[Corollary 5.35]{Vershynin2012}, we have
\begin{equation*}
 \left\| \frac{m}{\left\vert\mathcal{S}^{\perp}\right\vert} \boldsymbol{Y}^{(1)} - \left(\alpha_{0} + \frac{\beta_{0}-\alpha_{0}}{r}\right) \boldsymbol{I} \right\| \leq \frac{\beta_{0}}{40r}, 
 \end{equation*}
 with probability at least $1-2e^{-\gamma \left\vert\mathcal{S}^{\perp}\right\vert/r^{2}}$ for some constant $\gamma$ provided $\left\vert\mathcal{S}^{\perp}\right\vert\geq cnr^2$ for some constant $c$. In particular, this gives
\begin{equation}\label{part1_part1}
 \left\| \frac{m}{\left\vert\mathcal{S}^{\perp}\right\vert}\boldsymbol{Y}_{T^{\perp}}^{(1)} -\left(\alpha_{0} + \frac{\beta_{0}-\alpha_{0}}{r}\right) \boldsymbol{I}_{T^{\perp}} \right\| \leq \frac{\beta_{0}}{40r}.
 \end{equation}

 Let $\boldsymbol{a}^{\prime}_{j}= (\bI -\bU\bU^T)\boldsymbol{a}_{j}$ be the projection of $\boldsymbol{a}_{j}$ onto the orthogonal complement of the column space of $\bU$, then we have
 \begin{equation*}
 \boldsymbol{Y}_{T^{\perp}}^{(0)} = \frac{1}{m} \sum_{j\in\mathcal{S}^{\perp}} \boldsymbol{\epsilon}_j \boldsymbol{\epsilon}_j^T,
 \end{equation*}
 where $\boldsymbol{\epsilon}_j =  \left(\frac{1}{r}\sum_{i=1}^r \left|\boldsymbol{a}_{j}^{T}\boldsymbol{u}_{i}\right|^{2} {\bf 1}_{\left\{  \left|\boldsymbol{a}_{j}^{T}\boldsymbol{u}_{i}\right| \leq 3 \right\} } \right)^{1/2} \ba_j^{\prime}$ are i.i.d. copies of a zero-mean, isotropic and sub-Gaussian random vector $\boldsymbol{\epsilon}$, which satisfies $\mathbb{E}[\boldsymbol{\epsilon}\boldsymbol{\epsilon}^T] = \alpha_{0} \boldsymbol{I}_{T^{\perp}}$. Following \cite[Theorem 5.39]{Vershynin2012}, we have
\begin{equation}\label{YT0} 
\left\| \frac{m}{\left\vert\mathcal{S}^{\perp}\right\vert}\boldsymbol{Y}_{T^{\perp}}^{(0)} -  \alpha_{0}  \boldsymbol{I}_{T^{\perp}} \right\|  \leq \frac{\alpha_{0}}{40r},
\end{equation} 
 with probability at least $1-2e^{-\gamma \left\vert\mathcal{S}^{\perp}\right\vert/r^{2}}$ for some constant $\gamma$ provided $\left\vert\mathcal{S}^{\perp}\right\vert\geq cnr^2$ for some constant $c$. As a result, if $m\ge cnr^2$ for some large constant $c$ and $\left\vert\mathcal{S}\right\vert \le c_{1}m$ for some constant $c_{1}$ small enough, with probability at least $1-e^{-\gamma m/r^{2}}$, there exists 
\begin{align}
&\left\Vert \boldsymbol{Y}_{T^{\perp}}^{(0)}  - \boldsymbol{Y}_{T^{\perp}}^{(1)}  +  \frac{\beta_{0}-\alpha_{0}}{r}\boldsymbol{I}_{T^{\perp}} \right\Vert  \nonumber \\
&\le \left\Vert \boldsymbol{Y}_{T^{\perp}}^{(0)}  - \boldsymbol{Y}_{T^{\perp}}^{(1)}  +  \frac{\beta_{0}-\alpha_{0}}{r}\frac{\left\vert\mathcal{S}^{\perp}\right\vert}{m}\boldsymbol{I}_{T^{\perp}} \right\Vert \nonumber\\
&\quad + \left(1-\frac{\left\vert\mathcal{S}^{\perp}\right\vert}{m}\right)\frac{\beta_{0}-\alpha_{0}}{r} \nonumber\\
&\le \left\Vert \frac{m}{\left\vert\mathcal{S}^{\perp}\right\vert}\boldsymbol{Y}_{T^{\perp}}^{(0)} - \frac{m}{\left\vert\mathcal{S}^{\perp}\right\vert}\boldsymbol{Y}_{T^{\perp}}^{(1)} +  \frac{\beta_{0}-\alpha_{0}}{r}\boldsymbol{I}_{T^{\perp}}\right\Vert + \frac{\left\vert\mathcal{S}\right\vert}{m} \frac{\beta_{0}-\alpha_{0}}{r} \nonumber\\
&\le \frac{\beta_{0}}{30r} + \frac{\alpha_{0}}{60r}.\label{part1_YT0_toge}
\end{align}

Next, let's check $\left\Vert\bY_{T^{\perp}}^{(2)}\right\Vert$. Since $\bY^{(2)} = \frac{1}{m}\sum_{j\in\cS}9\chi_j\ba_j\ba_j^T$, where $\mathbb{E}[9\chi_j\ba_j\ba_j^T]=\boldsymbol{0}$, by \cite[Theorem 5.39]{Vershynin2012} we have
\begin{equation*}
\left\|  \bY^{(2)} \right\| =\frac{|\cS|}{m}  \left\| \frac{1}{|\cS|}\sum_{j\in \cS}9 \chi_j  \ba_j\ba_j^T \right\|  \leq \frac{1}{10r},
\end{equation*}
with probability at least $1-2\exp(-\gamma m/r)$ as long as $m\geq cnr^2$ and $|\cS| = c_{1}m/r \ge c_{2}nr$, for some constants $c$, $c_{1}$ and $c_{2}$. In particular, this gives
\begin{equation}
\left\|  \bY_{T^\perp}^{(2)} \right\|  \leq \frac{1}{10r} .
\end{equation}

Putting this together with \eqref{part1_YT0_toge}, we can obtain that if $m\ge cnr^2$ and $|\cS| = c_{1}m/r \ge c_{2}nr$ for some constants $c$, $c_{1}$ and $c_{2}$, with probability at least $1-e^{-\gamma m/r^{2}}$, 
\begin{align*}
\left\Vert\bY_{T^\perp} + \frac{1.7}{r} \bI_{T^{\perp}}\right\Vert & = \left\Vert\bY_{T^\perp}^{(0)}-\bY_{T^\perp}^{(1)}  + \bY_{T^\perp}^{(2)}  + \frac{1.7}{r} \bI_{T^{\perp}}\right\Vert \\
&\le \left( \frac{\alpha_{0}}{60} +\frac{\beta_{0}}{30}  + 0.11\right) \frac{1}{r} \le \frac{0.25}{r}.
\end{align*}

\subsubsection{Proof of $\left\Vert\bY_{T} \right\Vert_{\mathrm{F}}\leq \frac{1}{13r}$} 

Let $\tilde{\bY}=\left(\bY^{(0)}-\bY^{(1)}\right)\bU$, and $\tilde{\bY}^{\prime} = \left(\bI-\bU \bU^{T}\right)\tilde{\bY}$ be the projection of $\tilde{\bY}$ onto the orthogonal complement of $\bU$, then we have 
\begin{equation}\label{YT0_F}
\left\Vert\bY^{(0)}_T-\bY^{(1)}_{T}\right\Vert_{\mathrm{F}}^2=\left\Vert \bU^{T}\tilde{\bY}\right\Vert_{\mathrm{F}}^{2}+ 2\left\Vert \tilde{\bY}^{\prime} \right\Vert_{\mathrm{F}}^2.
\end{equation}

First consider the term $\|\bU^{T}\tilde{\bY}\|_{\mathrm{F}}^{2}$ in \eqref{YT0_F}, where the $k$th column of $\bU^{T}\tilde{\bY}$ can be expressed explicitly as
\begin{equation*}
\begin{split}
&\left(\bU^{T}\tilde{\bY}\right)_{k}\\
&=\frac{1}{m}\sum_{j\in\mathcal{S}^{\perp}}\left[\frac{1}{r} \sum_{i=1}^r \left|\boldsymbol{a}_{j}^{T}\boldsymbol{u}_{i}\right|^{2} {\bf 1}_{\left\{  \left|\boldsymbol{a}_{j}^{T}\boldsymbol{u}_{i}\right| \leq 3 \right\} }- \left(\alpha_{0} + \frac{\beta_{0}-\alpha_{0}}{r}\right)   \right]  \\
&\quad \cdot \left(\boldsymbol{a}_{j}^{T}\boldsymbol{u}_{k}\right)\left(\bU^{T}\boldsymbol{a}_{j}\right)\\
&:=\frac{1}{m}\boldsymbol{\Phi}\boldsymbol{c}_k,
\end{split}
\end{equation*}
where $\boldsymbol{\Phi}\in\mathbb{R}^{r\times \left\vert\mathcal{S}^{\perp}\right\vert}$ is constructed by $\bU^{T}\boldsymbol{a}_{j}$'s, and $\boldsymbol{c}_{k}\in\mathbb{R}^{\left\vert\mathcal{S}^{\perp}\right\vert}$ is composed of $c_{k,j}$'s, each one expressed as
\begin{equation*}
\begin{split}
&c_{k,j} = \\
&  \left[\frac{1}{r}  \sum_{i=1}^r \left|\boldsymbol{a}_{j}^{T}\boldsymbol{u}_{i}\right|^{2} {\bf 1}_{\left\{ \left|\boldsymbol{a}_{j}^{T}\boldsymbol{u}_{i}\right| \leq 3 \right\} }- \left(\alpha_{0}+\frac{\beta_{0}-\alpha_{0}}{r}\right)   \right]   \left(\boldsymbol{a}_{j}^{T}\boldsymbol{u}_{k}\right),
\end{split}
\end{equation*}
with
\begin{equation*}
\begin{split}
 \mathbb{E}[c_{k,j}^2] &=\frac{1}{r^{2}}\left(\theta_{0} + \left(r-1\right)\beta_{0} - \beta^{2}_{0} - \left(r-1\right)\alpha^{2}_{0} \right)\\
 &= \frac{1}{r}\left(\beta_{0}-\alpha^{2}_{0}\right) + \frac{1}{r^{2}}\left(\theta_{0}+\alpha^{2}_{0}-\beta^{2}_{0}-\beta_{0}\right)  \leq \frac{4.07}{r}.
 \end{split}
 \end{equation*}
Note that $c^{2}_{k,j}$'s are i.i.d. sub-exponential random variables with $\left\Vert c^{2}_{k,j}\right\Vert_{\psi_{1}}\le K$, for some constant $K$, then according to \cite[Corollary 5.17]{Vershynin2012},  
\begin{equation*}
\mathbb{P}\left\{\left\vert\sum_{j\in\mathcal{S}^{\perp}}\left(c^{2}_{k,j}-\mathbb{E}c^{2}_{k,j}\right)\right\vert \ge \frac{\epsilon}{r}\left\vert\mathcal{S}^{\perp}\right\vert \right\} \le 2\mathrm{exp}\left(-c\frac{\epsilon^{2}\left\vert\mathcal{S}^{\perp}\right\vert}{K^{2}r^{2}}\right),
\end{equation*}
which shows that as long as $\left\vert\mathcal{S}\right\vert\le c_{1}m$, for some constants $c$ and $c_{1}$, 
\begin{equation*}
\left\Vert\boldsymbol{c}_{k}\right\Vert^{2}_{2}\le \frac{4.07+c}{r}m\le \frac{4.1m}{r}
\end{equation*}
holds with probability at least $1-e^{-\gamma m/r^{2}}$. Furthermore, for a fixed vector $\boldsymbol{x}\in\mathbb{R}^{\left\vert\mathcal{S}^{\perp}\right\vert}$ obeying $\left\Vert\boldsymbol{x}\right\Vert_{2}=1$, $\left\Vert\boldsymbol{\Phi}\boldsymbol{x}\right\Vert^{2}_{2}$ is distributed as a chi-square random variable with $r$ degrees of freedom. From \cite[Lemma 1]{laurent2000adaptive}, we have 
\begin{equation*}
\left\Vert\boldsymbol{\Phi}\boldsymbol{x}\right\Vert^{2}_{2} \le \frac{m}{12000r^{2}},
\end{equation*}
with probability at least $1-e^{-\gamma m/r^{2}}$, provided $m\ge cnr^{2}$ for some sufficiently large constant $c$. Therefore, we can obtain 
\begin{equation*}
\left\Vert \left(\bU^{T}\tilde{\bY}\right)_{k} \right\Vert^{2}_{2} = \frac{1}{m^2}\left\|\boldsymbol{\Phi}\frac{\boldsymbol{c}_k}{\|\boldsymbol{c}_k\|_2} \right\|_2^2 \|\boldsymbol{c}_k\|_2^2 \le \frac{1}{2700r^{3}},
\end{equation*}
which yields 
\begin{equation}\label{UT_tildeY}
\|\bU^{T}\tilde{\bY}\|_{\mathrm{F}}^{2}= \sum_{k=1}^{r}\left\Vert \left(\bU^{T}\tilde{\bY}\right)_{k} \right\Vert^{2}_{2} \le\frac{1}{2700r^{2}},
\end{equation} 
with probability at least $1-e^{-\gamma m/r^{2}}$, when $m\ge cnr^{2}$ and  $\left\vert\mathcal{S}\right\vert \le c_{1}m$.

To bound the second term in \eqref{YT0_F}, we could adopt the same techniques as before. The $k$th column of $\tilde{\bY}^{\prime}$ can be expressed explicitly as
\begin{equation*}
\begin{split}
&\left(\tilde{\bY}^{\prime}\right)_{k}\\
&=\frac{1}{m}\sum_{j\in\mathcal{S}^{\perp}}\left[ \frac{1}{r}\sum_{i=1}^r \left|\boldsymbol{a}_{j}^{T}\boldsymbol{u}_{i}\right|^{2} {\bf 1}_{\left\{ \left|\boldsymbol{a}_{j}^{T}\boldsymbol{u}_{i}\right| \leq 3 \right\} }- \left(\alpha_{0} + \frac{\beta_{0} - \alpha_{0}}{r}\right)  \right]\\
&\quad \cdot   (\boldsymbol{a}_{j}^{T}\boldsymbol{u}_{k}) \left(\boldsymbol{I} - \bU \bU^{T}\right)\boldsymbol{a}_{j}\\
& :=\frac{1}{m}\sum_{j\in\mathcal{S}^{\perp}} c_{k,j} \ba_{j}^{\prime} :=\frac{1}{m}\boldsymbol{\Psi}\boldsymbol{c}_k,
\end{split}
\end{equation*}
where $\boldsymbol{\Psi}\in\mathbb{R}^{n\times\left\vert\mathcal{S}^{\perp}\right\vert}$ is constructed by $\ba_{j}'$'s, each of which, as a reminder, is the projection of $\ba_{j}$ onto the orthogonal complement of the column space of $\bU$ as $\ba_{j}^{\prime}=\left(\boldsymbol{I} - \bU \bU^{T}\right)\ba_j$. Equivalently, $\boldsymbol{\Psi}= \left(\boldsymbol{I} - \bU \bU^{T}\right) \boldsymbol{A}$, where $\boldsymbol{A}\in\mathbb{R}^{n\times\left\vert\mathcal{S}^{\perp}\right\vert}$ is constructed by $\boldsymbol{a}_{j}$'s, $j\in\mathcal{S}^{\perp}$. For a fixed vector $\boldsymbol{x}\in\mathbb{R}^{\left\vert\mathcal{S}^{\perp}\right\vert}$ obeying $\left\Vert\boldsymbol{x}\right\Vert_{2}=1$, we have $\left\Vert\boldsymbol{\Psi}\boldsymbol{x}\right\Vert^{2}_{2} = \left\Vert\left(\boldsymbol{I} - \bU \bU^{T}\right) \boldsymbol{A}\boldsymbol{x}\right\Vert^{2}_{2}\le \left\Vert\boldsymbol{A}\boldsymbol{x}\right\Vert^{2}_{2}$, where $\left\Vert\boldsymbol{A}\boldsymbol{x}\right\Vert^{2}_{2}$ is distributed as a chi-square random variable with $n$ degrees of freedom. Again \cite[Lemma 1]{laurent2000adaptive} tells us  
\begin{equation*}
\left\Vert\boldsymbol{\Psi}\boldsymbol{x}\right\Vert^{2}_{2} \le \left\Vert\boldsymbol{A}\boldsymbol{x}\right\Vert^{2}_{2} \le \frac{m}{12000r^{2}},
\end{equation*}
with probability exceeding $1-e^{-\gamma m/r^{2}}$, provided $m\ge cnr^{2}$ for a sufficiently large constant $c$. Hence,
\begin{equation*}
\left\Vert \left(\tilde{\bY}^{\prime}\right)_{k}\right\Vert_2^2 \leq \frac{1}{m^2}\left\|\boldsymbol{\Psi}\frac{\boldsymbol{c}_k}{\|\boldsymbol{c}_k\|_2} \right\|_2^2 \|\boldsymbol{c}_k\|_2^2 \leq   \frac{1}{2700r^3},
\end{equation*}
which leads to 
\begin{equation}\label{YprimeF}
\left\|\tilde{\bY}^{\prime}\right\|_{\mathrm{F}}^2 =\sum_{k=1}^{r}\left\Vert \left(\tilde{\bY}^{\prime}\right)_{k}\right\Vert_2^2  \leq \frac{1}{2700r^2}.
\end{equation}
Then, combining \eqref{UT_tildeY} and \eqref{YprimeF}, we know that 
\begin{equation}
\left\Vert\bY^{(0)}_T-\bY^{(1)}_{T}\right\Vert_{\mathrm{F}}=\sqrt{\left\Vert \bU^{T}\tilde{\bY}\right\Vert_{\mathrm{F}}^{2}+ 2\left\Vert \tilde{\bY}^{\prime} \right\Vert_{\mathrm{F}}^2} \le \frac{1}{30r}.
\end{equation}

Next, let's check $\|\bY_{T}^{(2)}\|_{\mathrm{F}}^{2}$, which can be written as 
\begin{equation*}
\|\bY_{T}^{(2)}\|_{\mathrm{F}}^{2}= \|\bU^{T} \bar{\bY} \|_{\mathrm{F}}^{2} + 2\| \bar{\bY}^{\prime} \|_{\mathrm{F}}^2,
\end{equation*}
where $\bar{\bY}=\bY^{(2)}\bU$ and $\bar{\bY}^{\prime} = (\bI-\bU\bU^T)\bar{\bY}$. For the first term $\|\bU^{T} \bar{\bY} \|_{\mathrm{F}}^{2}$, the $k$th column of $\bU^{T} \bar{\bY}$ can be formulated explicitly as
\begin{equation*}
\left(\bU^{T} \bar{\bY}\right)_{k} = \frac{1}{m}\sum_{j\in\mathcal{S}} 9 \chi_{j} \left(\boldsymbol{a}_{j}^{T}\boldsymbol{u}_{k}\right)\left(\bU^{T}\boldsymbol{a}_{j}\right):=\frac{1}{m}\bar{\boldsymbol{\Phi}}\boldsymbol{d}_k,
\end{equation*}
where $\bar{\boldsymbol{\Phi}}\in\mathbb{R}^{r\times \left\vert\mathcal{S}\right\vert}$ is constructed by $\bU^{T}\boldsymbol{a}_{j}$'s, and $\boldsymbol{d}_{k}\in\mathbb{R}^{\left\vert\mathcal{S}\right\vert}$ is composed of $d_{k,j}$'s, each one expressed as
\begin{equation*}
d_{k,j} = 9 \chi_{j} \left(\boldsymbol{a}_{j}^{T}\boldsymbol{u}_{k}\right),
\end{equation*}
with $\mathbb{E}[d_{k,j}^2] =81$. Note that $d^{2}_{k,j}$'s are i.i.d. sub-exponential random variables with $\left\Vert d^{2}_{k,j}\right\Vert_{\psi_{1}}\le K$, for some constant $K$, then based on \cite[Corollary 5.17]{Vershynin2012},  
\begin{equation*}
\mathbb{P}\left\{\left\vert\sum_{j\in\mathcal{S}}\left(d^{2}_{k,j}-\mathbb{E}d^{2}_{k,j}\right)\right\vert \ge \epsilon\left\vert\mathcal{S}\right\vert \right\} \le 2\mathrm{exp}\left(-c_{1}\frac{\epsilon^{2}\left\vert\mathcal{S}\right\vert}{K^{2}}\right),
\end{equation*}
which indicates that if $\left\vert\mathcal{S}\right\vert = cm/r$, for some constant $c$, 
\begin{equation*}
\left\Vert\boldsymbol{d}_{k}\right\Vert^{2}_{2}\le \left(81+c_{1}\right) \left\vert\mathcal{S}\right\vert \le 82 \left\vert\mathcal{S}\right\vert :=\delta_{0} \left\vert\mathcal{S}\right\vert
\end{equation*}
holds with probability at least $1-e^{-\gamma m/r}$. And for a fixed vector $\boldsymbol{x}\in\mathbb{R}^{\left\vert\mathcal{S}\right\vert}$ obeying $\left\Vert\boldsymbol{x}\right\Vert_{2}=1$, $\left\Vert\bar{\boldsymbol{\Phi}}\boldsymbol{x}\right\Vert^{2}_{2}$ is also a chi-square random variable with $r$ degrees of freedom, so   
\begin{equation*}
\left\Vert\bar{\boldsymbol{\Phi}}\boldsymbol{x}\right\Vert^{2}_{2} \le \frac{m}{2700\delta_{0}cr^{2}},
\end{equation*}
with probability at least $1-e^{-\gamma m/r^{2}}$, provided $m\ge c_{1}nr^{2}$ for some sufficiently large constant $c_{1}$. Thus we have 
\begin{equation*}
\left\Vert \left(\bU^{T}\bar{\bY}\right)_{k} \right\Vert^{2}_{2} = \frac{1}{m^2}\left\|\bar{\boldsymbol{\Phi}}\frac{\boldsymbol{d}_k}{\|\boldsymbol{d}_k\|_2} \right\|_2^2 \|\boldsymbol{d}_k\|_2^2 \le \frac{1}{2700r^{3}},
\end{equation*}
which gives 
\begin{equation}\label{UT_barY}
\|\bU^{T}\bar{\bY}\|_{\mathrm{F}}^{2}=\sum_{k=1}^{r}\left\Vert \left(\bU^{T}\bar{\bY}\right)_{k} \right\Vert^{2}_{2} \le\frac{1}{2700r^{2}},
\end{equation} 
with probability at least $1-e^{-\gamma m/r^{2}}$, when $m\ge c_{1}nr^{2}$ and  $\left\vert\mathcal{S}\right\vert = cm/r$, for some appropriate constants $c$ and $c_{1}$.

Now consider the second term $\| \bar{\bY}^{\prime} \|_{\mathrm{F}}^2$ in $\|\bY_{T}^{(2)}\|_{\mathrm{F}}^{2}$, where the $k$th column of $\bar{\bY}^{\prime}$ can be expressed explicitly as
\begin{equation*}
\begin{split}
\left(\bar{\bY}^{\prime}\right)_{k} &=\frac{1}{m}\sum_{j\in\mathcal{S}} 9 \chi_{j}  (\boldsymbol{a}_{j}^{T}\boldsymbol{u}_{k}) \left(\boldsymbol{I} - \bU \bU^{T}\right)\boldsymbol{a}_{j}\\
& :=\frac{1}{m}\sum_{j\in\mathcal{S}} d_{k,j} \ba_{j}^{\prime} :=\frac{1}{m}\bar{\boldsymbol{\Psi}}\boldsymbol{d}_k,
\end{split}
\end{equation*}
where $\bar{\boldsymbol{\Psi}}\in\mathbb{R}^{n\times\left\vert\mathcal{S}\right\vert}$ is constructed by $\ba_{j}'$'s. Also, we can decompose $\bar{\boldsymbol{\Psi}}$ as $\bar{\boldsymbol{\Psi}}= \left(\boldsymbol{I} - \bU \bU^{T}\right) \bar{\boldsymbol{A}}$, where $\bar{\boldsymbol{A}}\in\mathbb{R}^{n\times\left\vert\mathcal{S}\right\vert}$ is constructed by $\boldsymbol{a}_{j}$'s, $j\in\mathcal{S}$. For a fixed vector $\boldsymbol{x}\in\mathbb{R}^{\left\vert\mathcal{S}\right\vert}$ obeying $\left\Vert\boldsymbol{x}\right\Vert_{2}=1$, we have $\left\Vert\bar{\boldsymbol{\Psi}}\boldsymbol{x}\right\Vert^{2}_{2} = \left\Vert\left(\boldsymbol{I} - \bU \bU^{T}\right) \bar{\boldsymbol{A}}\boldsymbol{x}\right\Vert^{2}_{2}\le \left\Vert\bar{\boldsymbol{A}}\boldsymbol{x}\right\Vert^{2}_{2}$, where $\left\Vert\bar{\boldsymbol{A}}\boldsymbol{x}\right\Vert^{2}_{2}$ is a chi-square random variable with $n$ degrees of freedom as well. Since we already know that provided $m\ge c_{1}nr^{2}$ for a sufficiently large constant $c_{1}$,  
\begin{equation*}
\left\Vert\bar{\boldsymbol{\Psi}}\boldsymbol{x}\right\Vert^{2}_{2} \le \left\Vert\bar{\boldsymbol{A}}\boldsymbol{x}\right\Vert^{2}_{2} \le \frac{m}{2700\delta_{0}cr^{2}},
\end{equation*}
with probability exceeding $1-e^{-\gamma m/r^{2}}$, we can have 
\begin{equation*}
\left\Vert \left(\bar{\bY}^{\prime}\right)_{k}\right\Vert_2^2 \leq \frac{1}{m^2}\left\|\bar{\boldsymbol{\Psi}}\frac{\boldsymbol{d}_k}{\|\boldsymbol{d}_k\|_2} \right\|_2^2 \|\boldsymbol{d}_k\|_2^2 \leq   \frac{1}{2700r^3},
\end{equation*}
and a further result
\begin{equation}\label{YprimeF_bar}
\left\|\bar{\bY}^{\prime}\right\|_{\mathrm{F}}^2 =\sum_{k=1}^{r}\left\Vert \left(\bar{\bY}^{\prime}\right)_{k}\right\Vert_2^2  \leq \frac{1}{2700r^2},
\end{equation}
which, combining with \eqref{UT_barY}, leads to
\begin{equation}
\left\Vert\bY^{(2)}_T\right\Vert_{\mathrm{F}}=\sqrt{\left\Vert \bU^{T}\bar{\bY}\right\Vert_{\mathrm{F}}^{2}+ 2\left\Vert \bar{\bY}^{\prime} \right\Vert_{\mathrm{F}}^2} \le \frac{1}{30r}.
\end{equation}

Finally we can obtain that if $m\ge cnr^2$ and $\left\vert\mathcal{S}\right\vert = c_{1}m/r$, for some constants $c$ and  $c_{1}$, with probability at least $1-e^{-\gamma m/r^{2}}$,  
\begin{equation}
\left\Vert\bY_{T} \right\Vert_{\mathrm{F}} = \left\Vert \bY^{(0)}_T-\bY^{(1)}_{T} + \bY^{(2)}_T  \right\Vert_{\mathrm{F}} \leq \frac{1}{15r}.
\end{equation}

\subsection{Proof of Theorem~\ref{main}} \label{main_theorem_proof}
The required restricted isometry properties of the linear mapping $\mathcal{A}$ are supplied in Section~\ref{sec:isometryA} and a valid appropriate dual certificate is constructed in Section~\ref{dual-construction}, therefore, Theorem~\ref{main} can be straightforwardly obtained from the Lemma~\ref{lemma-dual-certificate} in Section~\ref{appro_dual_cert}.

\section{Conclusion} \label{sec:conclusion}

In this paper, we address the problem of estimating a low-rank PSD matrix $\bX\in\mathbb{R}^{n\times n}$ from rank-one measurements that are possibly corrupted by arbitrary outliers and bounded noise. This problem has many applications in covariance sketching, phase space tomography, and noncoherent detection in communications. It is shown that with an order of $nr^2$ random Gaussian sensing vectors, a PSD matrix of rank-$r$ can be robustly recovered by minimizing the $\ell_1$-norm of the observation residual within the semidefinite cone with high probability, even when a fraction of the measurements are adversarially corrupted. This convex formulation eliminates the need for trace minimization and tuning of parameters without prior knowledge of the outliers. Moreover, a non-convex subgradient descent algorithm is proposed with excellent empirical performance, when additional information of the rank of the PSD matrix is available. For future work, it would be interesting to theoretically justify the proposed non-convex algorithm. Finally, we note that very recently one of the authors proposed a median-truncated gradient descent algorithm for phase retrieval under a constant proportion of outliers with provable performance guarantees in \cite{zhang2016provable}, which might be possible to extend to the problem of robust low-rank PSD matrix recovery considered in this paper and will be pursued elsewhere.

\yc{\section*{Acknowledgement}
We thank the anonymous reviewers for their valuable suggestions that greatly improved the quality of this paper.}

\bibliographystyle{IEEEtran}
\bibliography{bibfileToeplitzPR,bibfileToeplitzPR2}

\end{document}